\def\BibTeX{{\rm B\kern-.05em{\sc i\kern-.025em b}\kern-.08em
    T\kern-.1667em\lower.7ex\hbox{E}\kern-.125emX}}
\crefname{section}{Sect.}{Sects.}
\Crefname{section}{Section}{Sections}
\newcommand{\XSpace}[1]{}
\newcommand{\XComment}[1]{}
\newcommand{\revise}[1]{\textcolor{black}{#1}}
\newcommand{\DefMacro}[2]{\expandafter\newcommand\csname rmk-#1\endcsname{#2}}
\newcommand{\UseMacro}[1]{\csname rmk-#1\endcsname}
\renewcommand{\paragraph}[1]{\vskip 0.05in \noindent\textbf{#1.}}
\newcommand{\InputWithSpace}[1]{\bgroup\def\arraystretch{1.1}\input{#1}\egroup}
\newcommand{\Code}[1]{{\ifmmode{\mathtt{#1}}\else$\mathtt{#1}$\fi}}
\newcommand{\CodeIn}[1]{{\ifmmode{\mathtt{#1}}\else$\mathtt{#1}$\fi}}
\newcommand{\tool}{\textsc{SmCon}\xspace}
\definecolor{silver}{RGB}{192,192,192}
\newcolumntype{R}[1]{>{\RaggedLeft\arraybackslash}p{#1}}
\newcolumntype{L}[1]{>{\RaggedRight\arraybackslash}p{#1}}
\newtcolorbox{rqbox}[1]{%
    tikznode boxed title,
    enhanced,
    arc=0mm,
    boxrule=0.5pt,
    interior style={white},
    attach boxed title to top center= {yshift=-\tcboxedtitleheight/2},
    colbacktitle=white,coltitle=black,
    boxed title style={size=normal,colframe=white,boxrule=0pt},
    title=\textbf{Answer to }{\textbf{#1}}}
\newcommand{\join}{\oplus}
\definecolor{codegreen}{rgb}{0,0.6,0}
\definecolor{codegray}{rgb}{0.5,0.5,0.5}
\definecolor{codepurple}{rgb}{0.58,0,0.82}
\definecolor{backcolour}{rgb}{0.95,0.95,0.92}
\newcommand{\setBAcolsep}[1]{\BA@colsep=#1}
\definecolor{verylightgray}{rgb}{.97,.97,.97}
\lstdefinelanguage{Solidity}{
	keywords=[1]{anonymous, assembly, assert, balance, break, call, callcode, case, catch, class, constant, continue, constructor, contract, debugger, default, delegatecall, delete, do, else, emit, event, experimental, export, external, false, finally, for, function, gas, if, implements, import, in, indexed, instanceof, interface, internal, is, length, library, log0, log1, log2, log3, log4, memory, modifier, new, payable, pragma, private, protected, public, pure, push, require, return, returns, revert, selfdestruct, send, solidity, storage, struct, suicide, super, switch, then, this, throw, true, try, typeof, using, value, view, while, with, addmod, ecrecover, keccak256, mulmod, ripemd160, sha256, sha3}, % generic keywords including crypto operations
	keywordstyle=[1]\color{blue}\bfseries,
	keywords=[2]{address, bool, byte, bytes, bytes1, bytes2, bytes3, bytes4, bytes5, bytes6, bytes7, bytes8, bytes9, bytes10, bytes11, bytes12, bytes13, bytes14, bytes15, bytes16, bytes17, bytes18, bytes19, bytes20, bytes21, bytes22, bytes23, bytes24, bytes25, bytes26, bytes27, bytes28, bytes29, bytes30, bytes31, bytes32, enum, int, int8, int16, int24, int32, int40, int48, int56, int64, int72, int80, int88, int96, int104, int112, int120, int128, int136, int144, int152, int160, int168, int176, int184, int192, int200, int208, int216, int224, int232, int240, int248, int256, mapping, string, uint, uint8, uint16, uint24, uint32, uint40, uint48, uint56, uint64, uint72, uint80, uint88, uint96, uint104, uint112, uint120, uint128, uint136, uint144, uint152, uint160, uint168, uint176, uint184, uint192, uint200, uint208, uint216, uint224, uint232, uint240, uint248, uint256, var, void, ether, finney, szabo, wei, days, hours, minutes, seconds, weeks, years},	% types; money and time units
	keywordstyle=[2]\color{teal}\bfseries,
	keywords=[3]{block, blockhash, coinbase, difficulty, gaslimit, number, timestamp, msg, data, gas, sender, sig, value, now, tx, gasprice, origin},	% environment variables
	keywordstyle=[3]\color{violet}\bfseries,
	identifierstyle=\color{black},
	sensitive=false,
	comment=[l]{//},
	morecomment=[s]{/*}{*/},
	commentstyle=\color{gray}\ttfamily,
	stringstyle=\color{red}\ttfamily,
	morestring=[b]',
	morestring=[b]"
}
\newcommand{\website}{{https://sites.google.com/view/smcon/}}
\tikzstyle{block} = [rectangle, draw, text width=5em,
\tikzstyle{line} = [draw, -latex']
\tikzstyle{cloud} = [draw, ellipse, node distance=2.5cm, minimum height=2em]
\tikzstyle{blank} = [node distance=1cm]
\newcommand{\lts}{\ensuremath{\mathit{LTS}}\xspace}
\newcommand{\efsm}{\ensuremath{\mathit{EFSM}}\xspace}
\newcommand{\dicether}{Dicether\xspace}
\renewcommand{\#}{\texttt{\symbol{`\#}}}
\newcommand{\history}{\ensuremath{\mathit{LTS_h}}\xspace}
\newcommand{\efsmtext}{extended finite state machine\xspace}
\newcommand{\cegar}{counterexample-guided abstraction refinement\xspace}
\newcommand{\Cegar}{Counterexample-guided abstraction refinement\xspace}
\newcommand{\minimalabstraction}{minimal existential abstraction\xspace}
\newcommand{\erc}{ERC-20\xspace}
\newcommand{\gamechannel}{\texttt{GameChannel}\xspace}
\newsavebox{\mybox}
\newtheorem{definition}{Definition}
\newtheorem{theorem}{Theorem}
\newcommand{\onetail}{\textsc{1-Tail}\xspace}
\newcommand{\twotail}{\textsc{2-Tail}\xspace}
\newcommand{\onesekt}{\textsc{SEKT-1}\xspace}
\newcommand{\twosekt}{\textsc{SEKT-2}\xspace}
\newcommand{\contractor}{\textsc{Contractor++}\xspace}
\newcommand{\code}[1]{\emph{#1}}
\newcommand{\linebreakand}{%
  \end{@IEEEauthorhalign}
  \hfill\mbox{}\par
  \mbox{}\hfill\begin{@IEEEauthorhalign}
}
\begin{document}

%%
%% The "title" command has an optional parameter,
%% allowing the author to define a "short title" to be used in page headers.

\title{Specification Mining for Smart Contracts with Trace Slicing and Predicate Abstraction}
%\titlerunning{Specification Mining for Smart Contracts}

\IEEEoverridecommandlockouts % Allows you to modify the default settings

% Do not include author information
\author{\IEEEauthorblockN{Ye Liu$^1$\thanks{This work was done while Ye Liu was a student at Nanyang Technological University.}}
\IEEEauthorblockA{\textit{Nanyang Technological University} \\
Singapore \\
li0003ye@e.ntu.edu.sg}
\and
\IEEEauthorblockN{Yixuan Liu}
\IEEEauthorblockA{\textit{Nanyang Technological University} \\
Singapore\\
liuy0255@e.ntu.edu.sg}
\linebreakand
\IEEEauthorblockN{Yi Li}
\IEEEauthorblockA{\textit{Nanyang Technological University} \\
Singapore\\
yi\_li@ntu.edu.sg}
\and
\IEEEauthorblockN{Cyrille Artho}
\IEEEauthorblockA{\textit{KTH Royal Institute of
Technology} \\
Sweden \\
artho@kth.se}}

% author names and affiliations
% use a multiple column layout for up to three different
% affiliations
\maketitle

% \author{Ye Liu}
% \email{li0003ye@ntu.edu.sg}
% \affiliation{%
% 	\institution{Nanyang Technological University}
% 	\country{Singapore}
% }

% \author{Yixuan Liu}
% \email{LIUY0255@e.ntu.edu.sg}
% \affiliation{%
% 	\institution{Nanyang Technological University}
% 	\country{Singapore}
% }

% \author{Yi Li}
% \email{yi\_li@ntu.edu.sg}
% \orcid{0000-0003-4562-8208}
% \affiliation{%
% 	\institution{Nanyang Technological University}
% 	\country{Singapore}
% }

% \author{Cyrille Artho}
% \email{artho@kth.se}
% \affiliation{%
% 	\institution{KTH Royal Institute of
% 		Technology}
% 	\country{Sweden}
% }

%\institute{}
%%
%% The "author" command and its associated commands are used to define
%% the authors and their affiliations.
%% Of note is the shared affiliation of the first two authors, and the
%% "authornote" and "authornotemark" commands
%% used to denote shared contribution to the research.

%%
%% The abstract is a short summary of the work to be presented in the
%% article.
\begin{abstract}
	Smart contracts are computer programs running on blockchains to implement Decentralized
	Applications.
	The absence of contract specifications hinders routine tasks, such as contract understanding and
	testing.
	In this work, we propose a specification mining approach to infer contract specifications from
	past transaction histories.
	Our approach derives high-level behavioral automata of function invocations,
	accompanied by program invariants statistically inferred from the transaction histories.
	We implemented our approach as tool \tool and evaluated it on eleven well-studied Azure benchmark smart contracts and six popular real-world DApp
	smart contracts.
	The experiments show that \tool mines reasonably accurate specifications that
	can be used to enhance symbolic analysis of smart contracts achieving higher code coverage and up to 56\,\% speedup, and facilitate DApp developers in maintaining high-quality documentation and test suites.
\end{abstract}

%!TEX root = ../main.tex
\section{Introduction}

Blockchain technology has developed rapidly in recent years, since the introduction of
Bitcoin~\cite{nakamoto2008bitcoin} by Nakamoto in 2008.
Blockchain itself is a distributed ledger maintained and shared by a peer-to-peer (P2P) network,
and it evolved into a platform which supports the deployment and execution of smart contracts,
popularized by Ethereum~\cite{Ethereum}.
Smart contracts are self-executing computer programs used to implement Decentralized Applications
(DApps).
Users interact with smart contracts by executing transactions on the blockchain.
% The transaction data specifies inputs, such as the address of the sender, the smart
% contract function called, and parameter values.
Ethereum, the most prominent smart contract platform, is empowering many DApps, spanning areas
such as
%SELECT count(*) FROM `bigquery-public-data.crypto_ethereum.contracts`
%WHERE DATE(block_timestamp) < "2019-9-13"
%result: 17999231
%SELECT count(*) FROM `bigquery-public-data.crypto_ethereum.contracts`
%result: 51132591
% 51132591 / 17999231 = 2.84
finance, health, governance, games, etc.~\cite{dappradar}.
As of May 2023, there are more than 50 million smart contracts deployed on Ethereum, and these smart contracts have supported 13,968 DApps~\cite{Etherscan,dappradar}.

Despite the high stakes involved, smart contracts are often developed in an undisciplined way.
The existence of bugs and vulnerabilities compromises the reliability and security of smart
contracts and endangers the trust of users.
% For example, many smart contracts restrict user access to functionalities by implementing certain
% access control policies.
Durieux et~al.~\cite{durieux2020empirical} reported that nearly 10\% of the smart contracts may
contain security vulnerabilities related to access controls.
\erc~\cite{eip20} is the most popular smart contract standard on Ethereum, yet 13\% of the \erc
token contracts do not conform to the standard specification~\cite{chen2019tokenscope}.
Moreover, Qin et~al.~\cite{qin2021attacking} demonstrated how economic behavior models can be
exploited to attack the DeFi ecosystem with flash loans.
A major difficulty in validating the conformance of smart contracts, i.e., whether the contract
implementation adheres to the expected behaviors, is the lack of documented formal specifications.

Formal specifications capture the expected contract behaviors, in terms of formal languages, based
on a formal model~\cite{jiao2020semantic} with precise semantics.
Specifications of a smart contract play a central role in describing, understanding, reasoning
about contract behaviors, and detecting, through testing and verification, non-conformance issues
such as functional bugs and security vulnerabilities.

% \liu{
% Add illustration figure to demonstrate the research problem:  mining on sequential user actions vs mining on concurrent user actions 
% }

Similar to traditional formal specifications, two forms of smart contract specifications have
been studied in past work:
(1) function-level program invariants~\cite{Liu2022IAD}, which are used in
testing~\cite{Wang2019VUL}, verification~\cite{permenev2020verx,liu2020towards}, and runtime
validation~\cite{li2020securing} of smart contracts; and (2) contract-level behavioral
specifications in the form of automata~\cite{mavridou2018designing}, which can be used to support
contract synthesis~\cite{mavridou2018tool}, model-based testing~\cite{Liu2020MAM}, design verification~\cite{mavridou2019verisolid}, and workflow verification~\cite{wang2019formal}.
Specifically, Wang et al.~\cite{wang2019formal} performed workflow verification via semantic
conformance checking between state machine-based workflow specifications and smart contracts from
the \textit{Azure Blockchain Workbench}, an enterprise blockchain from Microsoft.

In this paper, we focus on mining high-level automata-based specifications automatically for smart
contracts.
Many approaches have been proposed for this task on traditional program traces: for example,
grammar inference techniques~\cite{aarts2012automata,clarke2000counterexample,de2010grammatical}
and deep learning-based techniques~\cite{le2018deep} have been used to learn automata from a set
of program execution traces.
The k-tail algorithm and its
variants~\cite{biermann1972synthesis,krka2014automatic,lorenzoli2008automatic} merge states if the
same set of ``tail'' invocation sequences are observed.

However, the way smart contracts behave poses new challenges for mining automata-based behavioral
models.
As they are usually deployed on public blockchain networks, smart contracts handle multiple user
interactions simultaneously.
Therefore, the execution traces recorded in contract transaction histories consist of interleaving
events triggered by different user interactions and may belong to different sessions.
Since there does not exist a standard approach for managing user sessions, the execution traces
cannot be easily separated for independent interactions.
Moreover, predicate abstraction is crucial in deriving compact but accurate automata.
Yet, the choice of predicates remains challenging and is often tightly tied with the specific
analysis tasks.
The predicate abstraction techniques used in computing state abstractions must be tailored to take
into account the specific data structures and runtime environments of smart contracts.
To mine more accurate automata specification efficiently for smart contracts, we propose a
specification mining algorithm powered by trace slicing and predicate
abstraction~\cite{graf1997construction}.
The contract specification mining process is preceded by a slicing of the transaction histories.
We perform trace slicing on the transaction histories via a parametric binding learned from the
existing test suites.
A slice of history is a sequence of inter-related transactions, e.g., all transactions related to
one specific trade session.
%=======
%The contract specification mining process is parameterized by a transaction history slice.
%A slice of a transaction history is a sequence of independent function invocations.
%Therefore, we perform trace slicing on past transaction history via parametric binding learned from contract test suites.
%>>>>>>> minor
%Hence, we perform trace slicing to collect sets of independent interaction sequence from smart contract transaction history,
%which coverts the specification mining into a non-parametric one.
Smart contract transaction histories, being stored persistently on blockchain, record all past
function executions since the contract deployment.
To find suitable predicate candidates for state abstraction, we use a statistical inference
technique~\cite{daikon,Liu2022IAD} to generate a set of dynamic invariants, based on the
transaction histories.
Then, we follow the \cegar (CEGAR) approach~\cite{clarke2000counterexample} to perform a lazy state abstraction, and introduce
\minimalabstraction to ensure the automata specification is accurate and simple.
Finally, our automata specification subsumes all observed invocation sequences and at the same
time preserves its generality.
%=======
%We follow \cegar to perform a lazy state abstraction.
%Moreover, we introduce \minimalabstraction to restrict the generality of the automata specification.
%Finally, our approach generates automata specification that fits historical transactions and maintains generality without precision loss.
%>>>>>>> minor

%In this paper, we mine the high-level behavior model as the likely contract specification.
%We learn contract specification from the past transaction history of smart contracts that represent a subset of allowed behaviors by contract implementation.
%The high-level behavior model is an extended finite automata.
%To obtain the automata,
%we first find a lot of program-level predicate candidates satisfied by the data traces of those successfully executed transactions.
%With the data traces and these statistically infered predicates as input,
%we mine the extended finite automata using five inference rules.
%To achieve better generalization,
%the resulted automata is a slightly over-approximation of a minimal existential abstraction regarding the past transaction history.

In summary, we make the following contributions.
First, we formalize the specification mining problem for smart contracts.
Second, we propose a CEGAR-based specification mining algorithm, powered by trace slicing and
predicate abstraction.
Third, we implement our approach in tool~\tool and evaluate it on eleven well-studied Azure benchmark smart contracts and six popular real-world DApp smart contracts.
The experiments indicate that the mined specifications are precise and useful for DApp development and can enhance symbolic analysis of smart contracts in achieving higher code coverage and detecting more issues within smaller number of function call sequences and speeding up symbolic execution by up to 56\,\% by enforcing trace slicing.
The benchmarks, raw results, and source code are available at: {\website}.

\paragraph{Organization}
The rest of the paper is organized as follows.
\Cref{sec:background} provides the background.
\Cref{sec:glance} illustrates our approach through an example.
%\Cref{sec:analysis} formalizes the specification mining problem for smart contracts.
\Cref{sec:algorithm} introduces our specification mining algorithm, followed by the implementation
and evaluation in \cref{sec:eval}.
We compare with the related work in \cref{sec:related} and conclude the paper in
\cref{sec:conclusion}.
%\begin{figure}[!t]
%	\centering
%	\includegraphics[width=.95\columnwidth]{./}
%	\caption{The mined automata of ERC20 token contract.}
%\end{figure}

%!TEX root = ../main.tex
\section{Background}\label{sec:background}
%\liu{to update}
% \paragraph{Blockchain and Smart Contracts}

We borrow terminology about (non-)parametric events and traces
from~\cite{lee2011mining}.
\begin{definition} [Non-Parametric Events and Traces]
  Let $\xi$ be a set of (non-parametric) events, called base events or simply events.
  An $\xi$-trace, or simply a (non-parametric) trace is any finite sequence of events in $\xi$,
  that is, an element in $\xi^\star$. If event $e \in \xi$ appears in trace $w \in \xi^\star$ then
  we write $e \in w$.
  \label{def:nonparametric}
\end{definition}

\begin{definition}[Parametric Events and Traces]
  Let $X$ be a set of parameters and let $V$ be a set of corresponding parameter values.
  If $\xi$ is a set of base events as in \cref{def:nonparametric}, then let $\xi(X)$ be the set of
  corresponding parametric events $e(\theta)$, where $e$ is a base event in $\xi$ and $\theta$ is a
  partial function in $[X \rightharpoondown V]$.
  A parametric trace is a trace with events in $\xi(X)$, that is, a word in $\xi(X)^\star$.
\end{definition}

%Blockchain is a distributed ledger technology maintained and shared by a decentralized network.
%Usually, each network node is a miner that is able to process user transactions and then produce a new block connecting to the previous block of the blockchain via cryptographic hashes.
%
%Smart contracts are computer programs running on blockchain platforms such as Ethereum.
%Smart contracts are usually written in Turing-complete languages such as Solidity.
%Users interact with smart contracts by sending transactions to the blockchain network.
%Each transaction will be processed by a miner and the full transaction execution information will be recorded by the blockchain for traceability and transparency.

%\ca{Only the first and last paragraph mention smart contracts. This section is otherwise general and defines the spec.\ mining problem on execution traces. Either we keep this generic (and mention at the end that we apply this to smart contracts and what the specifics are in that context) or we clarify the points that are specific to SCs.}

From a user's perspective, a smart contract is a set of interface functions which can be invoked to
execute contract code.
Let these interface functions be represented as base events: $\xi$ is the set of interface function
names and $e \in \xi$ corresponds to a contract function.
The execution of $e$ accepts parameters (denoted as $X$), including the user-provided function
inputs ($X_1$) and the contract state variables ($X_2$) stored on the blockchain.
%For simplicity, we denote function inputs and global variables as $X_1$ and $X_2$, respectively.
Let $V$ be the corresponding values of $X$ in parametric traces.
Let $D_X$, $D_{X_1}$, and $D_{X_2}$ be the corresponding domains.
Finally, given a smart contract, let $\xi(X)$ be the set of all function executions, and any
function invocation sequence can be represented as a parametric trace (word) in $\xi(X)^\star$.
The behaviors of a smart contract can be captured by a labeled transition system that accepts all
its function invocation sequences.

%Smart contract consists of a set of state variables and a set of functions.
%Let smart contract has $\mathit{n}$ state variables $\mathit{V}:\{v_1, \cdots, v_n\}$,
%and $\mathit{k}$ functions $\mathit{M}:\{m_1,\cdots, m_k\}$.
%Each function $\mathit{m}$ has a finite set of input parameters $\mathit{R_m}$.

\begin{definition}[Labeled Transition System (\lts)~\cite{beillahi2020behavioral}]
%	\textbf{().}
	A smart contract is a labeled transition system $(\mathit{S}, \mathit{s}_0, \Sigma, \delta)$
	where $\mathit{S}$ is a set of possibly-infinite states, $S \subseteq D_{X_2}$ $\mathit{s}_0 \in
	S$ is an initial state,
	$\Sigma$ is a possibly-infinite alphabet, $\Sigma \subseteq \xi(X)^\star$, and $\delta \subseteq \mathit{S} \times \Sigma \times
	\mathit{S}$ is a set of transitions.
%	Let $\mathit{D_V}$ be the evaluation domains of $V$, we have $S \subset D_V$.
%	Let $R$ be all possible parameters with evaluation domain $D_R$, we have $\Sigma \subset D_R$.
%	$\delta(s_i, (m,$ \argumentvector $), s_j)$ represents a transition from a state $s_i$ to another state $s_j$ by calling  a method $m$ with \argumentvector, a vector of arguments over the parameters $R_m$.
	%		Clearly, $\mathit{S}=\mathit{D_{V_C}}$
	%		and $\Lambda = \mathit{D_{V_1}} \cup \cdots \cup \mathit{D_{V_m}}$.
\end{definition}
An \lts can be represented more compactly by abstracting it into an \efsm.
%Modeling smart contract behaviors using \lts will result an model with infinite concrete states or transitions.
%Therefore, in order to obtain a compact representation of smart contract behaviors, we model smart contract as \efsmtext.
%The definition is as follows.
\begin{definition}[Extended Finite State Machine (\efsm)~\cite{cheng1993automatic}]
\efsm is defined as a 6-tuple $(Q, q_0, \Sigma', G, U, T)$ where,
\begin{itemize}
	\item $Q$ is a finite set of symbolic states under a predicate abstraction $\alpha: S \rightarrow
	Q$,
  \item $q_0 \in Q$ is the initial symbolic state,
	\item $\Sigma'$ is a finite alphabet defined, $\Sigma' \subseteq \xi^\star$,
	\item $\mathit{G}$ is a set of \textit{guarding} function $\mathit{g_i}$ such that $g_i$: $D_X
	\rightarrow \{\mathit{True}, \mathit{False}\}$,
	\item $\mathit{U}$ is a set of \textit{update} function $u_i$ such that $u_i$: $D_X \rightarrow
	D_X$,
	\item ${T}$ is a transition relation, ${T}:  {Q} \times \mathit{G} \times \Sigma \rightarrow {U}
	\times Q $.
\end{itemize}
\end{definition}

To compute state abstractions, predicate abstraction~\cite{clarke2000counterexample} is typically
used, which is a function to create a partition of the domains of data types.
For example, the widely used predicate abstraction for integer domain is \{$\mathit{neg}$,
$\mathit{zero}$, $\mathit{pos}$\} which represent negative, zero and positive numbers respectively.
However, there could be many \efsm candidates that an \lts can be abstracted into.
In this paper, we borrow the concept of \emph{minimal existential
abstraction}~\cite{chauhan2002automated} and later use it to obtain a compact \efsm.
\begin{definition}[Minimal Existential Abstraction~\cite{chauhan2002automated}]
\label{def:minimal}
\efsm$=(Q,\allowbreak q_0, \allowbreak \Sigma', \allowbreak G, \allowbreak U, \allowbreak T)$ is
the \minimalabstraction of \lts$=(\mathit{S},
\mathit{s}_0, \Sigma, \delta)$ with respect to $\alpha: S \rightarrow Q$ iff,

\begin{small}
\vspace{-.05in}
\begin{equation}
\exists s_0\in S \cdot \alpha(s_0)=q \iff q = q_0
\end{equation}%
\vspace{-.25in}
\begin{multline}
\exists (s_0, e_0(\theta_0), s_1), \ldots, (s_{n-1}, e_{n-1}(\theta_{n-1}), s_n) \in \delta
\;\cdot \\
\alpha(s_0) = q_0 \land \alpha(s_1) = q_1 \land \cdots \land \alpha(s_{n-1}) = q_{n-1} \land
\alpha(s_{n}) = q_{n} \\ \iff (q_0, g_i,  e_0, u_i, q_1), \ldots, (q_{n-1}, g_j, e_{n-1}, u_j, q_n)
\in T
\end{multline}
\end{small}
\end{definition}

Intuitively, the \minimalabstraction implies that: (1) the initial concrete state can be mapped to
the initial symbolic state in the \efsmtext, and vice versa;
(2) every concrete path is preserved in the \efsmtext, and every symbolic path in the \efsmtext has
at least a corresponding concrete path.

%!TEX root = ../main.tex
\section{Approach at a Glance}\label{sec:glance}

\begin{figure*}[t]
  \centering
  \includegraphics[width=\textwidth]{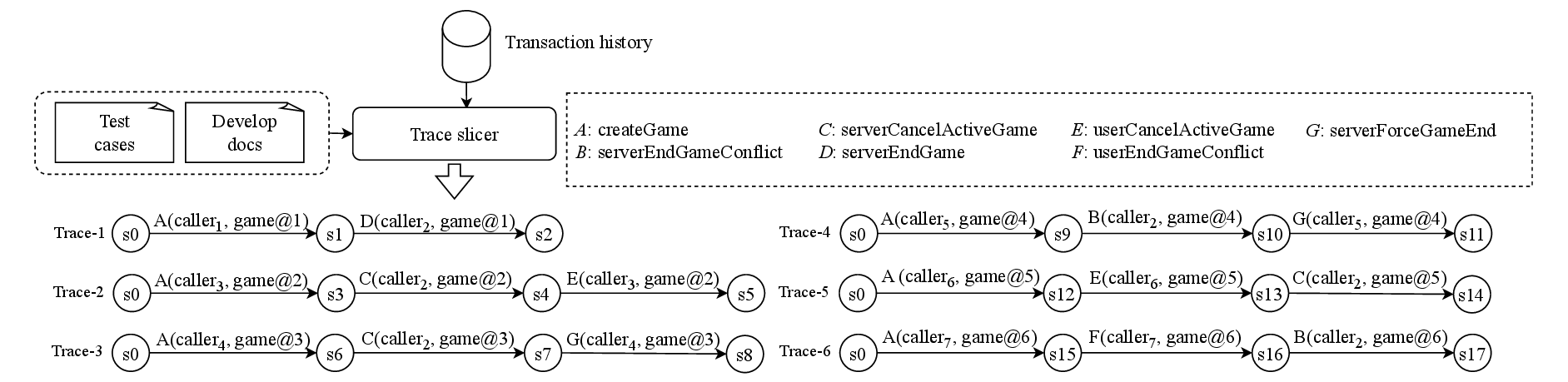}
  \caption{Six game invocation sequences for GameChannel.}\label{tab:Dicether-LOG}
\end{figure*}

We illustrate our approach using the \texttt{GameChannel} contract from a DApp called \dicether.
\dicether is a decentralized casino application on Ethereum, relying on a smart contract to provide
an open, secure, and fair gaming experience.
%.\ca{, which is a... casino application..., (move that sentence here so the reader knows that \dicether is about.)}
%\footnote{\yi{where is the source code?}}
%The \texttt{GameChannel} source code is presented in \cref{fig:dicether-code} of
%\comment{You cannot reference the appendix. Can you reference Fig. 2?}{\cref{sec:source}}.
A new game is created by calling the contract function \texttt{createGame}.
%After that, the game process will start and
%\yi{what the users do?}
%Then users participate in the game on a \comment{Is this important? Can we remove the
%sentence?}{side chain}, maintained by \dicether, which has higher
%transaction throughput than that of Ethereum.
When approaching the end of a game, an admin user may invoke the \texttt{serverEndGame} function to
close the game.
More details can be found in its \revise{development} documentation~\cite{dicether}.

\Cref{tab:Dicether-LOG} overviews how we separate interleaving interactions from past transaction
histories.
A transaction history is a sequence of transactions, where each transaction can be decoded as a
contract function invocation.
We apply a slicing function, which is determined by interaction patterns observed in test suites,
on the transaction history, to produce a set of independent invocation sequences.
For \gamechannel, there are six game interaction sequences, corresponding to six user sessions.
For instance, $user_1: \texttt{A(gameId:1)}$ indicates the invocation of \texttt{createGame} by
${user_1}$ for creating a game with index 1.
For simplicity, we omit the values of the other function parameters and the transaction environment
variables.
A function invocation may change the values of state variables, thus updating contract states.
%As a result, we observe 18 concrete states from $s_0$ to $s_{17}$, where each state is formed by
%the values of different contract state variables in~\gamechannel.
%In each invocation sequence, various game functions are called by different users following a
%certain order.
In \cref{tab:Dicether-LOG}, the first game is created by ${user_1}$ and after a
while ended by ${user_2}$.
Three game states, $s_0$, $s_1$, and $s_2$, are involved.
The second game is created by ${user_3}$, and later canceled by ${user_2}$ and
${user_3}$ via \texttt{serverCancelActiveGame} and \texttt{userCancelActiveGame}, respectively.

From these invocation sequences, we can construct an extended finite state machine
annotated with function pre-/post-conditions, as a specification of the observed contract behaviors.
Specifically, in \gamechannel, each function pre-/post-condition consists of a set of predicates
either relevant to game state variables or function input parameters.
\Cref{fig:gamechannel-coredata} shows the data structure used in \gamechannel, where
\texttt{server}, \texttt{gameIdCntr}, and \texttt{gameIdGame} maintain information about the game
manager, the number of created games, and all game state information, respectively.
The game state variables include \texttt{status}, \texttt{roundId}, \texttt{endInitiatedTime}, and \texttt{stake}.
The variable \texttt{status} being ENDED (0) indicates that a game either has not been created or
has already been terminated;
\texttt{roundId} is an unsigned integer used to record the current game round;
\texttt{endInitiatedTime} records when a game is required to terminate itself as per users'
requests;
and \texttt{stake} keeps the amount of fund that a player deposits into the contract when creating
a game.
Since all parameter values, including contract state variables and user-provided function inputs,
can be decoded from blockchain transactions, we can infer dynamic invariants to be candidates of
predicates on function pre-/post-conditions.
\Cref{fig:partition} shows the 11 resulting predicates.%, which partition the domain of each game state.
% Without loss of generality, we consider all state variables as unsigned integer variables.
% The partition of game state can be defined using the following 11% state-variable-relevant
% \yi{where do we get those predicates?} predicates:%

\begin{figure}[t]
{\small
\begin{align*}
& P_1: \mathit{status} = 0 \quad P_2: \mathit{status} = 1 \\
& P_3: \mathit{status} = 2 \quad P_4: \mathit{status} = 3 \quad P_5: \mathit{status} > 3\\
& P_6: \mathit{roundId} = 0 \quad P_7: \mathit{roundId} > 0\\
& P_8: \mathit{endInitiatedTime} = 0 \quad P_9: \mathit{endInitiatedTime} > 0 \\
& P_{10}: \mathit{stake} = 0 \quad P_{11}: \mathit{stake} >0
\end{align*}}%
\caption{The 11 predicates that partition the game state.}\label{fig:partition}
\end{figure}

Assume that all contract state variables are initialized to zero, so $s_0$ can be represented by
$P_1 \land P_6 \land P_8 \land P_{10}$.
These predicates also form the pre- and post-conditions in~\cref{tab:functionconditions}, where
some other parameter predicates in the pre- and post-conditions are over function input parameters,
i.\,e., ``\texttt{\_roundId}'' and caller of the function, i.\,e., ``\texttt{caller}''.
The precondition of the \textit{createGame} function is that all variable values, namely,
\texttt{status}, \texttt{stake}, \texttt{roundId}, and \texttt{endInitiatedTime}, are zero;
and its postcondition is that when \textit{createGame} finishes, the variable \texttt{status} is
set to \texttt{ACTIVE} (1) and the deposited \texttt{stake} is greater than zero and equals to the
transferred fund, i.\,e., $\mathit{msg.value}$.\footnote{In Solidity smart contracts,
$\mathit{msg.value}$ refers to the amount of transferred native cryptocurrency, e.g., ETH on
Ethereum, during contract function execution.}

\Cref{fig:dicether-spec} shows our mined automaton, of which we have confirmed the correctness using the ground truth specification of \gamechannel.
The mined automaton has seven symbolic states.
Only \texttt{createGame} can be called at the initial state ($q_0$).
Furthermore, when the caller is \texttt{server}, he/she is allowed to call \texttt{serverEndGame} to
terminate the game and move towards the final state ($q_6$) where \texttt{status} changes to be \texttt{ENDED}  (0).
%\yi{talk about the post-condition, etc.}
Such an automaton captures the common usages of \gamechannel and its permission policies, thus being a \emph{likely} contract specification.

\begin{table*}[t]
\centering
\caption{The function pre-/post-conditions of \gamechannel.}\label{tab:functionconditions}
\setlength\tabcolsep{5pt}
\resizebox{.99\textwidth}{!}{
\begin{tabular}{lp{6.2cm}p{6.2cm}}
  \toprule
  Functions              & Preconditions                                              &
  Post-conditions                                              \\ \midrule
  createGame             & $P_1 \land P_6\land P_8 \land P_{10}$                            & $P_2
  \land P_6 \land P_8 \land P_{11} \land (\mathit{stake}=\mathit{msg.value})$ \\
  serverEndGameConflict  & $(P_2 \lor P_3) \land (\_\mathit{roundId} > 0) \land (\mathit{caller}=\mathit{server})$ & $P_1 \lor
  (P_4 \land P_7 \land P_9)  \land (\mathit{roundId} = \_\mathit{roundId} ) $                         \\
  serverCancelActiveGame & $(P_2 \lor (P_3 \land P_6)) \land (\mathit{caller}=\mathit{server})$            & $P_1 \lor
  (P_4 \land P_9)$                                  \\
  serverEndGame          & $P_2 \land (\mathit{caller}=\mathit{server})$                                 &
  $P_1$                                                        \\
  userCancelActiveGame   & $P_2 \lor (P_4 \land P_6)$                                    & $P_1 \lor
  (P_3 \land P_9)$                                  \\
  userEndGameConflict    & $(P_2 \lor P_4) \land (\_\mathit{roundId} > 0) $                      & $P_1 \lor
  (P_3 \land P_7 \land P_9 ) \land (\mathit{roundId} = \_\mathit{roundId} )$     \\
  serverForceGameEnd     & $P_4 \land (\mathit{caller}=\mathit{server})$                                 &
  $P_1$                                                        \\ \bottomrule
\end{tabular}}
\vspace{.1in}
\end{table*}

\begin{figure*}[t]
	\begin{subfigure}[b]{.34\textwidth}
		\centering
    \includegraphics[width=\columnwidth]{./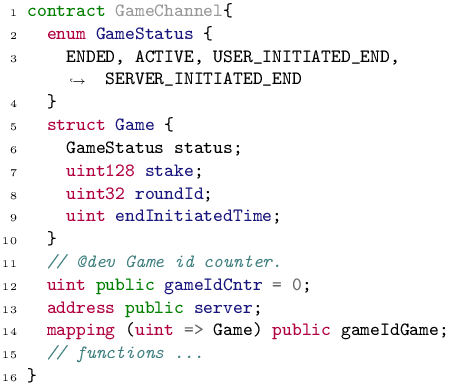}
		\caption{State variables of {GameChannel}.}
		\label{fig:gamechannel-coredata}
	\end{subfigure}
\hspace{-6mm}
	\begin{subfigure}[b]{.59\textwidth}
		\centering
    \includegraphics[width=\columnwidth]{./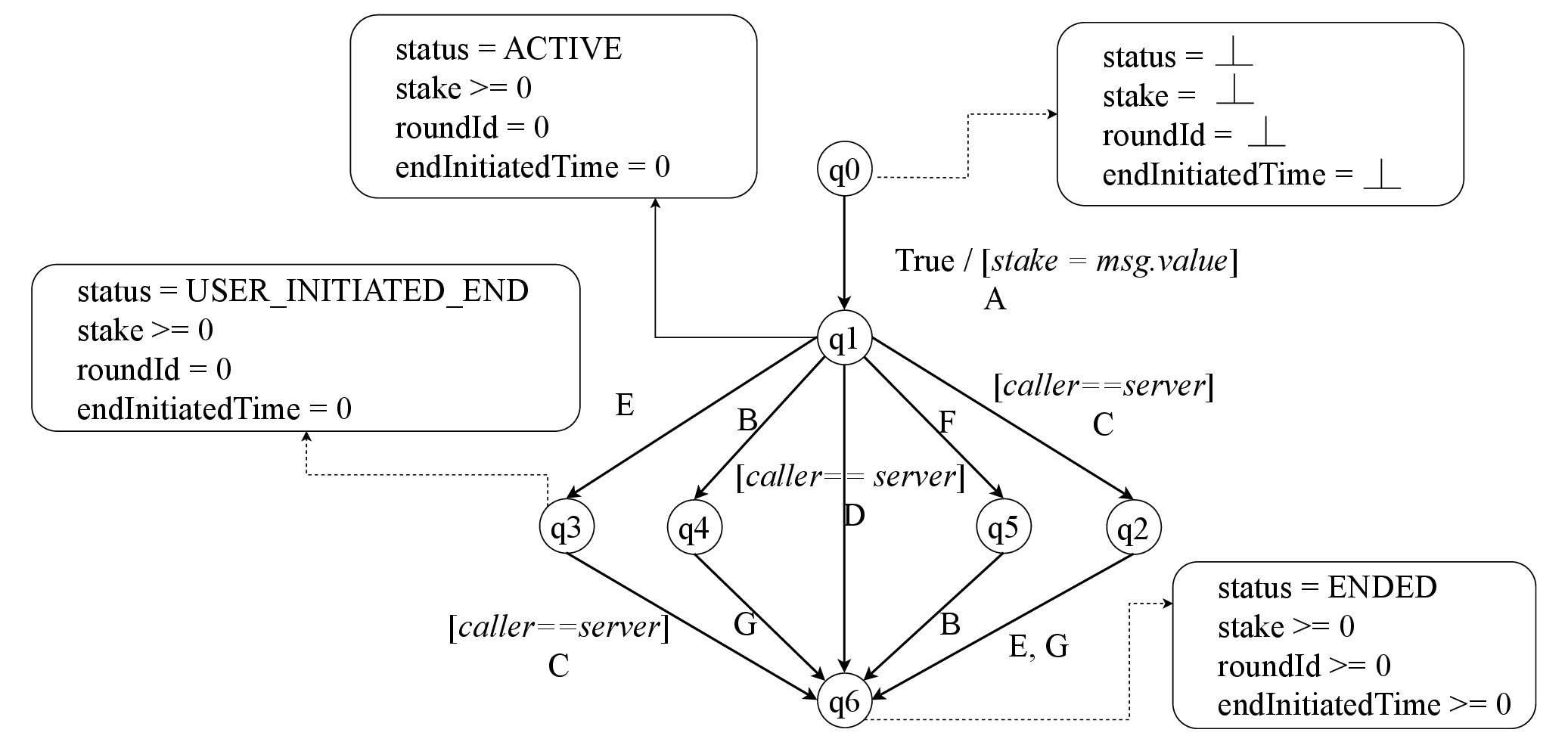}
		\caption{The mined automaton.}
		\label{fig:dicether-spec}
	\end{subfigure}
	\caption{The core data structure and the mined automaton of GameChannel.}
\end{figure*}

% \yi{This example should also demonstrate challenges in learning automata for smart contracts.}
% Existing invariant-based approaches~\comment{This existing work is very old, is there
% anything more recent}{\cite{krka2014automatic}} enumerate all the Cartesian products of the
% aforementioned 11 state predicates and take each predicate combination as a symbolic state.
% Then, they check whether a symbolic state is spurious according to some measures, e.g., if the symbolic state has been visited under the given invocation sequences.
% Finally, they discard those useless states.
% However, such an approach is inefficient.
% Initially, the number of their symbolic states is up to $40$ ($5\times 2 \times  2 \times 2$), among which eight combinations having $P_5$ are useless because \texttt{status} ranges between 0 (\texttt{ENDED}) and 3 (\texttt{SERVER\_INITIATED\_END}).
% As the ideal result has seven states, their resulting automata may need to merge some states in a
% k-tail-like way~\cite{biermann1972synthesis} where states having the same suffixes (event or
% function sequences) are merged.
% However, the result of k-tail process could be highly sensitive to the selected tail length and the given event sequences.

As for automata construction, \tool uses a CEGAR-like approach, which will be detailed
in \cref{sec:algorithm}.
Briefly, we perform a lazy abstraction, i.\,e., we do not refine predicate abstraction unless we have to.
To obtain an extended finite state machine,
\tool takes the sliced independent invocation sequences and the inferred function pre-/post-conditions as input.
Initially, we construct an automaton containing only two states and then revisit the automaton to recognize the spurious symbolic paths that have no support, i.\,e., a corresponding concrete invocation sequence in the past observations.
Then we refine the automaton to eliminate the spurious paths via either splitting larger states or removing unreachable transitions.
We repeat this process until no spurious path is included in the resulting automaton.

%!TEX root = ../main.tex
\section{Contract Specification Mining}\label{sec:algorithm}

In this section, we introduce the specification mining problem for smart contracts and present our
proposed algorithm.

\paragraph{Smart Contract Specification Mining}
Given a contract's transaction histories, where all the past contract behaviors are captured by
\history, the \emph{specification mining problem} is to mine an \efsm as the likely specification
of the smart contract.
%\end{definition}
%5.1 Algorithm (intuition and overall functionality)
%5.2 Example
%5.3 Correctness and Termination
To solve the specification mining problem, we first perform a \emph{trace slicing} on the input
transaction histories, to obtain multiple independent invocation traces.
Next, we find predicates that belong to preconditions or post-conditions of the smart contract's
functions.
Finally, we implement a \cegar loop to produce an \efsmtext, satisfying the minimal existential
abstraction property (see~\cref{def:minimal}).
%\liu{add an overview}

\subsection{Trace Slicing}

Smart contracts are public-facing, and, by their nature, simultaneously accept inputs from multiple
users.
Contract executions in such a setting result in a linear transaction history, which consists of
interleaving execution traces triggered through multiple user interactions/sessions.
To record data owned by different users, most smart contracts supporting DApps, maintain a
collection of custom data objects, indexed by user(session)-specific parameters.
%These bookkeeping variables are indexed so that specific data object can be accessed or modified.
%Usually, user interacts with individual data objects through the binding of function parameter(s).
For example, the \texttt{GameChannel} contract maintains many concurrent game instances as state
variables.
To interact with a particular game instance, a user needs to specify the value of its
\texttt{gameId}, through input parameters of the transaction (see~\cref{tab:Dicether-LOG}).
%Therefore, the mined contract specification is a parametric specification~\cite{lee2011mining} in
%its nature.
To mine meaningful contract specifications from transaction histories with mixed interactions, one
has to slice them into independent traces for each game instance.

\begin{definition}[Trace Slicing~\cite{lee2011mining}]
	Given a parametric trace $\tau \in \xi(X)^\star$ and a parametric binding $\theta$ in
	$[X \rightharpoondown V]$, let the $\theta$-trace slice $\tau\upharpoonright_\theta  \in \xi
	^\star$ be the non-parametric trace defined as:
	\begin{itemize}
		\item $\epsilon \upharpoonright_\theta = \epsilon$, where $\epsilon$ is the empty trace/word, and
		\item $ (\tau e(\theta')) \upharpoonright_\theta =
		 \begin{cases}
			 \mbox{$(\tau \upharpoonright_\theta) e$, } & \mbox{if } \theta' \sqsubseteq \theta \\ \mbox{$\tau \upharpoonright_\theta$,} & \mbox{otherwise}
		\end{cases}
	$
	\end{itemize}
	where we say that $\theta'$ is less informative than $\theta$, written $\theta' \sqsubseteq
	\theta$ iff for any $x \in X$, if $\theta'(x)$ is defined then $\theta(x)$ is also defined and
	$\theta'(x) = \theta(x)$.
\end{definition}

%\yi{explain the notations in def. 6}

%\begin{definition}[Parametric trace]
%	A parametric-free trace is a sequence of which each item is an event or a function.
%	A parametric trace \paramtrace is a trace where each item contains event/function name as well as its parameter values.
%\end{definition}
%
%\begin{definition}[Parametric binding]
%	Given a parametric trace \paramtrace, a parametric binding is to select specific event parameters of which its concrete values are used to slice \paramtrace into multiple parametric-free traces.
%\end{definition}
%

A transaction history of smart contract can be seen as a parametric trace, and \emph{trace slicing}
slices the history into a set of independent invocation sequences via certain parametric
bindings (e.g., $\theta$)~\cite{lee2011mining}.
A trace slice $\tau \upharpoonright_\theta$ first filters out all the parametric events that are
irrelevant to the parameter instance $\theta$.
A trace slice also forgets the parameter bindings of parametric events.
As a result, a trace slice is non-parametric and merely a list of base events.
To find parametric bindings, we should first ascertain the relation between different events, or
say function invocations in smart contracts.
Such parametric bindings can be inferred from the existing DApp test suites, which demonstrate
typical usage scenarios and user interaction patterns.
Specifically, we may observe a group of related functions and what parameter values they share in a
unit test.
For example, the test suites for \texttt{GameChannel} contain many well-written test cases where
game objects are explicitly specified by the ``\texttt{gameId}'' variable in each contract function.
Therefore, we can use such relations as a configuration to instruct how to automatically slice the transaction history according to the corresponding values of
``\texttt{gameId}'' to generate a set of independent game invocation sequences.

\subsection{Predicate Discovery from Dynamic Invariants}\label{sec:predicate}

The choice of predicates is crucial for computing good state abstractions.
In this paper, we use \emph{likely} pre- and post-conditions of contract functions as candidates.
Because of the blockchain transparency, we may decode the values of contract state variables and
user-provided function inputs, before and after each function invocation.
Then we statistically infer dynamic invariants for each function, which hold for all
observed invocations in the past transaction histories.
But since the transaction history may be limited, the inferred pre- and post-conditions are
\emph{likely} to hold, which is good enough to serve as predicate candidates.

%In addition, a set of parameter predicates can be statistically inferred from the corresponding
%parameter values in dynamic invariant tools such as Daikon.

%Similarly, we can recover all the \emph{likely} function-level specifications for all the called functions in the history.

% We filtered the generated invariants to keep those expressing relations in the form of ``$x \bowtie
% y$'', where $\bowtie$ is one of the following operator set $\{\texttt{==}, \texttt{!=}, \texttt{>},
% \texttt{<}, \texttt{<=}, \texttt{one of}\}$.
% These invariants serve as the source of predicate abstraction (see~\cref{sec:algorithm}).

More specifically, we define a \emph{predicate template} as ``$x \bowtie y$'', where $x \in X$ is a
parameter, and $y \in X \cup K$ is either a parameter or constant, and $\bowtie$ is an
operator from the set $\{\texttt{=}, \texttt{!=}, \texttt{>}, \texttt{<}, \texttt{<=},
\texttt{>=}\}$.
The template is instantiated on all successful transactions, which are not reverted during
executions, and the instances which always hold are kept as predicate candidates for either
function pre- or post-conditions.
The predicates defined over state variables are used in constructing the symbolic states $Q$ in EFSM.
%\begin{definition}[Predicate Template]\label{def:parameter-predicate}
%  Let $x \in X$ be a parameter and $y \in X$ is a parameter or $y = \mathcal{C}$ is a constant
%  value.
%  $x \bowtie y$ is the parameter predicate pattern where $\bowtie$ is one operator in the
%  set $\{\texttt{=}, \texttt{!=}, \texttt{>},
%  \texttt{<}, \texttt{<=}, \texttt{>=}\}$.
%  Especially, when $x\in X_1 \land (y \in X_1 \lor y = \mathcal{C})$, $ x \bowtie y$ is the state
%  variable predicate pattern.
%\end{definition}
%In~\cref{def:parameter-predicate}, state variable predicate pattern is specialized since the state
%variable predicates construct the symbolic states $Q$ in \efsm.

%Functions' pre- and post-conditions comprise a set of parameter predicates.
%To find parameter predicates, we need first define a set of concrete predicate patterns (e.\,g., $
%x \bowtie y$ where $\bowtie$ is ``=''), also called invariant templates in Daikon.
The inference process is similar to how dynamic invariants are detected in Daikon-like systems~\cite{daikon} through a set of predefined invariant templates.
However, smart contracts are usually writting in Turing-complete programming languages such as Solidity, supporting complex data structures including array, mapping and custom struct.
Thus, we built our invariant inference on a tool called InvCon+~\cite{liu2024automated,Liu2022IAD} capable of invariant detection for smart contracts.

\subsection{Automata Construction}

The over-generalization of the inferred function pre-/post-conditions is the main difficulty for their direct use in mining high-level automata specifications.
To address this problem, we use a CEGAR-like approach to mine automata specifications with predicate abstraction.

\begin{figure*}[t]
\centering\small
\begin{prooftree}
	\AxiomC{$\bot$}
	\RightLabel{\textsc{Init}}
	\UnaryInfC{$\langle q_0 \gets \underset{x\in X_1}{\land} x=0, Q \gets \{q_0, \lnot q_0\},
	\Sigma, G \gets \{g_m\}_m, U \gets \{u_m\}_m, T \gets \emptyset \rangle$}
\end{prooftree}
\begin{prooftree}
	\AxiomC{$\langle q_0, Q, \Sigma, G, U, T \rangle$}
	\AxiomC{$\exists\; q_i, q_j \in Q \cdot (q_i \land g_m) \land (q_j \land u_m)$}
  \AxiomC{$\nexists\; t \cdot (q_i, g_m, e_m, u_m, q_j) \in T$}
	\RightLabel{\textsc{Construct}}
	\TrinaryInfC{$\langle q_0, Q, \Sigma, G, U,T \gets T \cup \{t\} \rangle$}
\end{prooftree}
%	\begin{prooftree}
	%		\AxiomC{$<q_0, Q, \Sigma, G, U, T>$}
	%		\AxiomC{$\exists\; q_n \in Q \; \nexists\; \pi: q_0t_1q_1t_2 \cdots t_nq_n \in \prod T$}
	%		\RightLabel{\textsc{RmState}}
	%		\BinaryInfC{$<q_0, Q \gets Q \setminus \{q_n\}, \Sigma, G, U, T$ removes transitions starting
	%from $q_n$ $>$}
	%	\end{prooftree}
\begin{prooftree}
	\AxiomC{\efsm: $\langle q_0, Q, \Sigma, G, U, T \rangle$}
	\AxiomC{\stackanchor{$\exists\; \pi_{n}: q_0t_1t_2 \cdots t_{n}q_{n} \in \efsm \; \exists\;
	\mathrm{concretize}(\pi_{n}) \in \history$}{$\exists\; \pi_{n+1}:  \pi_{n} \join
	t_{n+1}q_{n+1} \in \efsm \; \nexists\; \mathrm{concretize}(\pi_{n+1}) \in \history$}}
	\RightLabel{\textsc{RmPath}}
	\BinaryInfC{\efsm $\gets$ \textsc{SplitRemove}($q_n, t_{n+1}$, \efsm)}
	% \AxiomC{$\exists \; t_i: {(q, \none, \none, \none,  \none)}, t_j: {(\none, \none, \none, \none,
	%q)}  \in T$}
	% \BinaryInfC{$T \gets T\setminus t_i, t_j$}
\end{prooftree}
\caption{Specification mining rules.}
\label{fig:rules}
\end{figure*}

\begin{algorithm}[t]
	\caption{\textsc{SplitRemove}($q_n, t_{n+1}$, \efsm)}
	\begin{algorithmic}[1]\small
		%		\For{$ 0 \le i \le n$}
		%		\State Let $\pi_{q_i} =   q_0 t_1 q_1 \cdots q_{i-2}t_{i-1}q_{i-1}t_i q_i $ \label{line:startstate}
		%		\If{$\exists \; \mathrm{concretize}(\pi_{q_{i-1}}) \in \prod \delta \land  \nexists \;
		%\mathrm{concretize}(\pi_{q_i}) \in \prod \delta$} \label{line:existing}
		\State Let $\langle q_0, Q, \Sigma, G, U, T \rangle = \efsm$
		\State Let $t_{n+1}\;=\;(q_n, g_m, e_m, u_m, q_{n+1}) \in T$  \Comment{a transition from state $q_n$ to $q_{n+1}$ by the invocation to function $e_m$ where $g_m$, $u_m$ are its precondition and post-condition, respectively.}
		\State $\hat{q_1} \; = \; q_n \land g_m$	 \label{line:split-1}
		\State $\hat{q_2} \;=\; q_n \land \lnot g_m$ \label{line:split-2}
		\If{$\mathrm{SAT}(\hat{q_1}) \land \mathrm{SAT}(\hat{q_2}) $}  \Comment{$q_n$ is splittable with $g_m$.} \label{line:sat}
		\State $Q \gets (Q \setminus q_n) \cup \{\hat{q_1}, \hat{q_2}\}$ \Comment{replace $q_n$ with
		two new states.} \label{line:sat-q}
		\State  Removes transitions starting or ending with $q_n$ in $T$ \label{line:sat-t}
		\Else \label{line:not-sat-begin}
		\If{$\nexists\; \pi_n' \cdot \pi_n' \in $\efsm $ \land \mathrm{concretize}(\pi_n'\join t_{n+1})
		\in$ \history} \label{line:notreachable}
		\State  $T \gets T \setminus t_{n+1}$  \Comment{remove unreachable transition $t_{n+1}$
	}
		\label{line:minust}
		\Else
		\State Let $S_{q_n | \pi_n'}$ be the set of all the concrete states of $q_n$ in the history \history, which are visited by the observed invocation sequences of $\pi_n'$. \label{line:other-start}
		\State $Q \gets (Q \setminus \{q_n\}) \cup \{ \mathrm{Pred}(S_{q_n | \pi_n'}), {q}_{n} \land
		\lnot \mathrm{Pred}(S_{q_n | \pi_n'})\}$ \label{line:other-split}
		\State Removes transitions starting or ending with $q_n$ in $T$ \label{line:other-remove}
		\EndIf
		% \State \Return
		\EndIf \label{line:not-sat-end}

		\State \Return $\langle q_0, Q, \Sigma, G, U, T \rangle$  \Comment{return the resulting
		automaton}	\label{line:return}
		%			\EndProcedure
	\end{algorithmic}
	\label{algo:split-remove}
\end{algorithm}

\paragraph{\Cegar}
To mine a precise specification, the key is to compute a precise state abstraction $\alpha$, which
partitions the contract state.
% $\varphi$ can be computed based on the data relations with respect to a set of values pairs where each pair is $(val_{pre}(V), val(R_m), val_{post}(V))$.
The abstraction function $\alpha$ is implicitly computed following the paradigm of
\cegar~\cite{clarke2000counterexample}.
We define our specification mining algorithm by the three rules in \cref{fig:rules}.
Our algorithm takes as input past observations of concrete invocation sequences and
inferred function pre-/post-conditions.
%<<<<<<< HEAD
%Our algorithm also utilizes the concrete states $S$ of \history to validate the newly created abstract states.
When the algorithm terminates, it produces an \efsm containing no spurious states and transitions.
%=======

The \textsc{Init} rule initializes a preliminary extended finite state machine containing two states: $q_0$,
referring to $\underset{x\in X_1}{\land} x=0$ that all state variables are valued zero, and $\lnot q_0$ for the remaining cases.
The guard function $G$ and update function $U$ are directly instantiated by the inferred function pre- and postconditions, respectively.
Also, the transition relation set $T$ is initialized to be empty.
Then, we apply the \textsc{Construct} rule to add theoretically feasible state transitions to the automaton.
A state transition is theoretically feasible if and only if it satisfies the logical conjunction of symbolic states and function preconditions or post-conditions.
The resulting automaton could be over-generalized such that it includes spurious state transition paths.
Therefore, we need to apply the \textsc{RmPath} rule, following \cref{algo:split-remove} to
eliminate those spurious state transition paths that are not supported in the concrete observations.
\Cref{algo:split-remove} rules out spurious paths by either state splitting or transition removal.
%Our algorithm also utilizes the concrete states $S$ of \history to validate the newly created abstract states.
These rules would be applied many times according to a fair scheduling.
When the algorithm terminates, it produces an \efsmtext, containing no spurious states or transitions.
The illustration of a running example and the fair scheduling and its proof are available at \website.
%>>>>>>> restruct section 5

\paragraph{Loop transitions}
\label{subsec:genearalization}
The resulting automaton does not allow loop transitions according to the \textsc{RmPath} rule.
However, this kind of automaton may not be precise and useful contract specifications.
Because many smart contracts have behavior cycles, it is preferred to have loop transitions in the
resulting automaton.
Therefore, we limit the range of path selection when applying \textsc{RmPath}, i.\,e., a loop
transition can only be covered once in any selected path.
For example, a state transition path $q-\mathit{Event}_a-q-\mathit{Event}_a-q$ is not under our consideration when allowing loop transitions.
With this minor modification to the \textsc{RmPath} rule, the resulting automaton allows loop
transitions so that it may express cycles.

%!TEX root = ../main.tex
\section{Implementation and Evaluation}\label{sec:eval}
\subsection{Implementation}
We implement trace slice approach and specification mining algorithm as a tool named \tool, written in around 3K lines of Python code.
Specifically, we apply our trace slicing approach
to retrieve independent user action traces from transaction histories according to the given trace slice configurations, and then we invoke InvCon+ to produce corresponding likely invariants.
Based on these sequences and likely invariants, we are able to perform specification mining for smart contracts.
Additionally, our algorithm relaxes the \textsc{RmPath} rule to allow loops in the contract specifications for
better generality (see~\cref{subsec:genearalization}).
We used the Z3 SMT solver~\cite{moura2008z3} for discharging satisfiability queries.

%\liu{Need rewrite. We incorporate \tool with a new invariant detector, called \textsc{InvConPlus}~\cite{liu2024automated}, for Solidity smart contracts that has several
%advantages over the existing work \textsc{InvCon}~\cite{Liu2022IAD}~\cite{Liu2022IAD}.
%To reuse the invariant detection capability of Daikon~\cite{daikon}, \textsc{InvCon}
%collects all the concrete data traces from past transaction histories of smart contracts and then
%translate them into the input accepted by Daikon.
%In contrast, \textsc{InvConPlus} is able to detect likely
%invariants for more complex Solidity data structures which cannot be handled by \textsc{InvCon}.
%Also, \textsc{InvConPlus} is enhanced with light-weight program dependency analysis to
%only generate relevant invariants over closely related variables.
%}
%<<<<<<< HEAD
%We used a modified version of Daikon~\cite{daikon} to generate function-level invariants for smart
%contracts from the past transaction histories.

We generate function-level invariants for smart contracts from the past transaction histories and
filter the generated invariants to keep those expressing parameter relations
(see~\cref{sec:predicate}).
%=======
%We used Daikon~\cite{daikon} to generate function-level invariants for smart contracts from the
%past transaction histories.
%We filtered the generated invariants to keep those expressing relations.
%>>>>>>> minor
These invariants serve as the parameter predicates that we use for automata construction (see~\cref{sec:algorithm}).

Through experiments, we evaluated \tool to answer the following three research questions:
\begin{itemize}[leftmargin=*]
\item \textbf{RQ1}: How effectively does \tool mine smart contract specifications  {compared with the state-of-the-arts}?
\item \textbf{RQ2}: How effectively does \tool mine automata from real-world DApp smart contracts, and with these automata, how is symbolic analysis for smart contracts enhanced?
% How \revise{useful} are the mined specification models for real-world smart contracts?
% \revise{Test Case Generation. Model-based vs Random-based vs Human manual test cases.}
\item \textbf{RQ3}: What are the implications for DApp developers?
%\liu{Case studies. (1) Complementing Existing Test Suites. (2) Discovering Outdated Documentation or Test Suites, (3) Mining Use Scenarios of Undocumented Smart Contracts, and (4) Deeper Understanding of ERC20 Specifications.}

%\item \textbf{RQ3}: Can the specification models prevent smart contracts from attack? \liu{Take ERC20 as an example}
\end{itemize}

\begin{table}[t]\centering
	\caption{The Azure smart contract benchmark.}\label{tab: }
	\scriptsize
	\begin{tabular}{llrr}\toprule
		\multirow{2}{*}{Contract} &\multirow{2}{*}{Description} &\multicolumn{2}{c}{Formal Specifications} \\\cmidrule{3-4}
		& &\# States &\# Transitions \\\midrule
		AssetTransfer &Selling high-value assets &11 &32 \\
		BasicProvenance &Keeping record of ownership &4 &4 \\
		BazaarItemListing &Selling items &4 &5 \\
		DefectCompCounter &Product counting &3 &2 \\
		DigitalLocker &Sharing digital files &7 &12 \\
		FreqFlyerRewards &Calculating flyer rewards &3 &3 \\
		HelloBlockchain &Request and response &3 &3 \\
		PingPongGame &Two-player games &4 &2 \\
		RefrigTransport &IoT monitoring &5 &8 \\
		RoomThermostat &Thermostat installation and use &3 &4 \\
		SimpleMarketplace &Owner and buyer transactions &4 &4 \\
		\midrule
		Average & &4.64 &7.18 \\
		\bottomrule
	\end{tabular}
\end{table}

\subsection{Methodology}
To answer RQ1, we evaluate \tool on parametric-free smart contracts from a well-studied benchmark used for Azure enterprise blockchain, where none of these contracts have index-related data structures so we do not perform trace slicing on their transactions.
This benchmark includes 11 smart contracts exhibiting stateful behaviors, ranging over supply chain management, digital control, virtual games, etc.
Each of these contracts is properly documented, and their specifications have been well formalized and examined by the previous work.
Such ground truth specifications are deemed as the reference models in our evaluation.
Because \tool aims to dynamically infer specification models from past contract executions,
we produce 10,000 transactions per contract using random test case generation.
In detail, we deploy every contract 100 times to our testnet.
Each contract instance is tested using 100 randomly generated transactions, which finally produce a trace, namely a sequence of contract executions.
Subsequently, we perform \tool on these contract traces to mine contract specification models.

%Two following metrics are used to evaluate the accuracy of \tool.
%\begin{itemize}
%\item \textbf{Model soundness} between ground truth and the mined models is computed over the number of states and transitions, and the model precision and recall.
%\item \textbf{Model completeness} measures how many percentages of normal contract executions in training set are successfully accepted by the mined model from contract executions in testing set.
%\end{itemize}

%Notice that model soundness underscores the consistency between ground truth and the mined models while model completeness reflects the conformance between contract implementations and the mined models.

To answer RQ2,
we evaluate \tool on real-world parametric smart contracts running on Ethereum.
We selected six popular Ethereum DApp smart contracts as shown
in \cref{tab:dapp}.
We selected them from the Top-10 DApps covering different application
domains~\cite{dappradar}, such as decentralized gaming, gambling, non-fungible token (NFT) usage,
and an exchange market.
For example, the DApp \texttt{SuperRare} has a total trading volume up to 557 million dollars
contributed by more than 10,000 users in nearly 100,000 transactions~\cite{SuperRare};
and MoonCatRescue has a total trading volume up to 73 million dollars involving more than 11,000 users~\cite{MoonCatRescue}.
These DApps have been deployed and running for a long period, since as early as 2017, and their
past transaction data can be downloaded from Ethereum.
Most of these DApps (except 0xfair) maintain some form of design documentation on their websites or
GitHub repositories; some also provide formal specifications, such as Dicether~\cite{dicether}.
In addition, well-organized DApp projects, such as the studied ones, maintain test suites that
exercise the core functionalities of the contracts with reasonable coverage.
With these artifacts, we are able to construct ground models manually for DApp contracts.
%\liu{
%Based on these data sources, we were able to construct their ground truth specifications manually,
%where two of the authors spent two hours per smart contract individually.
%In particular, we ensured that the tested behaviors must be included in the ground-truth
%specifications and added any additional untested behaviors clearly documented.
%}
We collected their contract code and transaction data from Etherscan~\cite{Etherscan} and Ethereum archive node hosted by QuickNode~\cite{QuickNode}.
Particularly, the number of transactions used for specification mining is also capped at 10,000 for all DApp smart contracts.

\paragraph{Evaluation Metrics}
To evaluate \tool, we use the accuracy metric recommended in~\cite{lo2006quark} for automata
specification mining evaluation.
The accuracy metric measures the similarity between the mined automata specification and the ground
truth, considering both precision and recall.
Precision is defined as the percentage of sequences generated by the mined automata that are
accepted by the ground truth, while recall is the percentage of sequences generated by the ground
truth that are accepted by the mined automata.
%Overall, precision and recall indicate correctness and completeness of the mined automata
%specification.
Following \cite{le2018deep}, we use the $F_1$-score to measure the overall accuracy, which is
defined as:
$
%\begin{equation*}
F_1 = \frac{2 \times \mathit{Precision} \times \mathit{Recall}}{\mathit{Precision} +
	\mathit{Recall}}.
%\end{equation*}
$
Since automata may have infinite sequences when they have loop transitions, to obtain accurate
precision, recall and $F_1$-score, we follow the similar strategies used in
previous works~\cite{le2018deep, le2015synergizing, lo2012learning} to generate the sequences.
We set the maximum number of generated sentences to 10,000 with minimum coverage of each transition to be 20 in the generated traces~\cite{le2018deep} and restrict the length of the traces to twice the number of transitions~\cite{krka2014automatic} in the ground-truth models that have been formalized by the Azure benchmark or manually constructed by ourselves.
In addition, for RQ1, we divide the transaction data into a training and a test set, where we mine the model from contract executions in the training set.
We use another accuracy metric, denoted as Acc, to measure how many percentages of contract executions in testing set are accepted by the mined model.

%To answer RQ3,
%we evaluate \tool on vulnerable real-world ERC20 contracts.
%We selected \liu{XXX} well-known ERC20 contracts that were hacked and lost a significant amount of money.
%For example, BEC, and \liu{xxx}.
%As for the experiment settings, we first generate specifications from the benign contract executions and then demonstrate its effectiveness in ceding malicious contract executions.

\subsection{Experiment Setup}
All experiments were conducted on an Ubuntu 20.04.1 LTS desktop equipped with an Intel Core i7 $16$-core processor and $32$~GB of memory.
The ground truth, benchmark contracts, and raw results are available at: \website.

\begin{table*}[t]\centering
	\caption{Experiement results on the Azure benchmark.}\label{tab:comparison}
	\small
	\resizebox{\textwidth}{!}{
\begin{tabular}{l@{}rrr|rrr|rrr|rrr|rrr|rrr}\toprule
	\multirow{2}{*}{Contract} &\multicolumn{3}{c|}{\onetail} &\multicolumn{3}{c|}{\twotail} &\multicolumn{3}{c|}{\onesekt} &\multicolumn{3}{c|}{\twosekt} &\multicolumn{3}{c|}{\contractor} &\multicolumn{3}{c}{\tool} \\\cmidrule{2-19}
	&\# States &$F_1$ &Acc &\# States &$F_1$ &Acc &\# States &$F_1$ &Acc &\# States &$F_1$ &Acc &\# States &$F_1$ &Acc &\# States &$F_1$ &Acc \\\midrule
AssetTransfer &24 &0.52 &0.93 &40 &0.47 &0.77 &24 &0.52 &0.93 &40 &0.47 &0.77 &13 &0.2 &1 &13 &0.34 &0.97 \\
BasicProvenance &4 &0.72 &1 &6 &0.67 &1 &4 &0.67 &1 &6 &0.7 &1 &3 &0.63 &1 &3 &0.8 &1 \\
BazaarItemListing &9 &0.94 &1 &94 &0.97 &0.84 &9 &0.94 &1 &83 &0.98 &0.87 &3 &0.89 &1 &3 &1 &1 \\
DefectCompCounter &3 &1 &1 &3 &1 &1 &3 &1 &1 &3 &1 &1 &3 &1 &1 &3 &1 &1 \\
DigitalLocker &18 &0.57 &0.95 &29 &0.34 &0.94 &18 &0.57 &0.95 &29 &0.34 &0.94 &9 &0.95 &1 &10 &0.87 &1 \\
FreqFlyerRewards &3 &1 &1 &5 &1 &1 &3 &1 &1 &5 &1 &1 &2 &1 &1 &2 &1 &1 \\
HelloBlockchain &4 &1 &1 &5 &1 &1 &4 &1 &1 &5 &1 &1 &3 &1 &1 &3 &1 &1 \\
PingPongGame &4 &0.77 &1 &4 &0.75 &1 &4 &0.77 &1 &4 &0.75 &1 &5 &0.51 &1 &4 &0.77 &1 \\
RefrigTransport &6 &0.7 &1 &8 &0.68 &1 &6 &0.7 &1 &8 &0.69 &1 &5 &0.43 &1 &5 &0.69 &1 \\
RoomThermostat &5 &0.88 &1 &9 &0.88 &1 &5 &0.88 &1 &9 &0.88 &1 &5 &1 &1 &6 &1 &1 \\
SimpleMarketplace &5 &1 &1 &6 &1 &1 &5 &1 &1 &6 &1 &1 &4 &1 &1 &5 &1 &1 \\
\midrule
Average &7.73 &0.83 &0.99 &19.00 &0.80 &0.96 &7.73 &0.82 &0.99 &18.00 &0.80 &0.96 &5.00 &0.78 &1.00 &5.18 &0.86 &1.00 \\
\bottomrule
\end{tabular}
}
\end{table*}

\subsection{RQ1. Effectiveness of \tool}
To answer RQ1, we compared \tool with five baseline approaches.
K-tail~\cite{biermann1972synthesis} learns an automaton from prefix trees of traces
by merging nodes with the same `tail' of length $k$.
We evaluated its two settings, \onetail when $k=1$ and \twotail when $k=2$.
SEKT~\cite{krka2014automatic} is a type of state-enhanced k-tail, which extends k-tail using
program state information inferred from the full set of observed executions.
We also evaluated its two settings, \onesekt when $k=1$ and \twosekt when $k=2$.
\contractor~\cite{de2010automated, krka2014automatic} creates finite state machine models
exclusively based on program invariants inferred from the observed executions.
To the best of our knowledge, there exists only one other approach to mine state machine models
from smart contract executions, by Guth et al.~\cite{guth2018specification}.
However, their tool was not available for comparison at the time of writing.
We will discuss this related work and compare with it in~\cref{sec:related}.

%\paragraph{Evaluation on Original Benchmark}
The original benchmark contracts do not always satisfy the specifications that come with them,
%We conducted a preliminary experiment with \tool and found that the precision score is usually
%quite low.
%After further investigation into this benchmark, we realized that the benchmark contracts do not
%strictly follow their formal specifications,
which has also been revealed by a previous study~\cite{wang2019formal}.
For a fair comparison with the other approaches, we manually repaired these issues and also
reported them to the developer~\cite{issue278,issue279, issue280, issue281}.
For instance, \code{SimpleMarketplace} is a contract application that implements a workflow for a
simple transaction between an owner and a buyer in a marketplace.
\code{SimpleMarketplace} has an \code{AcceptOffer} function to allow owner to accept the offer made
by buyers.
However, \code{AcceptOffer} even succeeds when there is no offer placed, thus violating its formal
specification~\cite{issue281}.

\paragraph{Evaluation results}
\Cref{tab:comparison} provides a detailed overview of the comparative performance of various tools, including our developed tool \tool, in the domain of smart contract specification mining. Each row corresponds to a specific smart contract, with columns showcasing essential metrics such as the number of state machine models generated (\# States), the F-score ($F_1$), and the accuracy (Acc). The evaluated tools, denoted as \onetail, \twotail, \onesekt, \twosekt, \contractor, and \tool, allow for a comprehensive analysis of their capabilities in extracting and representing contract specifications. The variety of contracts considered, ranging from \code{AssetTransfer} to \code{SimpleMarketplace}, ensures a diverse and thorough assessment of each tool's performance across different use cases.
\revise{We do not compare with grammar inference and deep learning techniques since our preliminary experiments with the minimal-description-length grammar inference by LearnLib~\cite{learnlib} indicate that grammar inference tends to overgeneralize, having very poor precision, while deep learning techniques demand a large volume of training data that is difficult to collect from real-world transactions.}   

Upon closer examination of the data, it is evident that \tool consistently exhibits competitive performance metrics, followed by \contractor. Notably, in the AssetTransfer contract, \tool outperforms \contractor by generating a state machine model with 13 states, resulting an higher $F_1$ score of 0.34 with neglectable loss of precision. Across all contracts, \tool maintains an average of 5.18 states per model, an impressive $F_1$ score of 0.86, and nearly perfect accuracy (1.00). These results highlight the efficacy of \tool in accurately capturing the intricacies of smart contract behavior. The tool's robust performance, in terms of model compactness, accuracy and $F_1$ score, distinguishes it from other baseline approaches, emphasizing its potential as a reliable solution for specification mining tasks.

In summary, \tool emerges as a promising tool for smart contract specification mining, achieving
good precision, recall, and accuracy for around three minutes per contract.
The presented results demonstrate its consistent ability to generate accurate state machine models across a diverse set of contracts.
The high average $F_1$ score and accuracy substantiates effectiveness in capturing the intended behavior of smart contracts.
These results establish \tool as a valuable resource for researchers and practitioners in search of a dependable and adaptable tool for real-world smart contract analysis.
\begin{table*}[!htp]\centering
	\caption{Experiment results on real-world DApp contracts.}\label{tab:dapp}
	\scriptsize
	\resizebox{\textwidth}{!}{
	\begin{tabular}{lrrrrrrrrrrrrrrrr}\toprule
		&\multirow{3}{*}{Description} &\multicolumn{2}{c}{\multirow{1}{*}{Mined}} &\multicolumn{6}{c}{Opcode Coverage} &\multicolumn{6}{c}{Number of Issues} \\\cmidrule{5-16}
		& & \multicolumn{2}{c}{\multirow{1}{*}{Specifications}} &\multicolumn{2}{c}{Mythril-Random} &\multicolumn{2}{c}{Mythril-SMCon} &\multicolumn{2}{c}{Statistics} &\multicolumn{2}{c}{Mythril-Random} &\multicolumn{2}{c}{Mythril-SMCon} &\multicolumn{2}{c}{Statistics} \\\cmidrule{3-16}
		& & \#State &\#Tran. &Avg. &Var. &Avg. &Var. &p-value &$\hat{A_{12}}$ &Avg. &Var. &Avg. &Var. &p-value &$\hat{A_{12}}$ \\\midrule
		CyptoKitties &kitty auction &2 &5 &68.20\% &0.0016 &70.21\% &0.0214 &0.3785 &0.6667 &4.5 &0.3 &6.5000 &7.5000 &0.0677 &0.6667 \\
		CryptoPunks &punk market &20 &81 &23.72\% &0.0003 &35.72\% &0.0162 &0.0345 &0.8333 &1 &0.4 &1.3333 &0.6667 &0.2243 &0.6389 \\
		SupeRare &art market &15 &88 &24.58\% &0 &28.12\% &0.0130 &0.2415 &0.3333 &0 &0 &0.3333 &0.2667 &0.0873 &0.6667 \\
		MoonCatRescue &cat adoption &18 &70 &18.88\% &0.0003 &38.76\% &0.0028 &0.0001 &1 &1 &0 &1 &0 &NA &0.5000 \\
		0xfair &RPS game &4 &5 &46.11\% &0.0003 &50.09\% &0.0000 &0.0011 &1 &4 &0 &5 &0 &NA &1 \\
		Dicether &bet game &8 &16 &50.83\% &0.0022 &32.99\% &0 &0.0001 &0 &4 &0.8 &3 &0 &0.0204 &0.1667 \\
		\midrule
		Average & & & &38.72\% &0.0008 &42.65\% &0.0089 &0.1093 &0.6389 &2.4167 &0.25 &2.8611 &1.4056 &0.1000 &0.6065 \\
		\bottomrule
		\end{tabular}
	}
\end{table*}

\begin{figure}[t]
	\includegraphics[width=\columnwidth, trim=0 0 0 0]{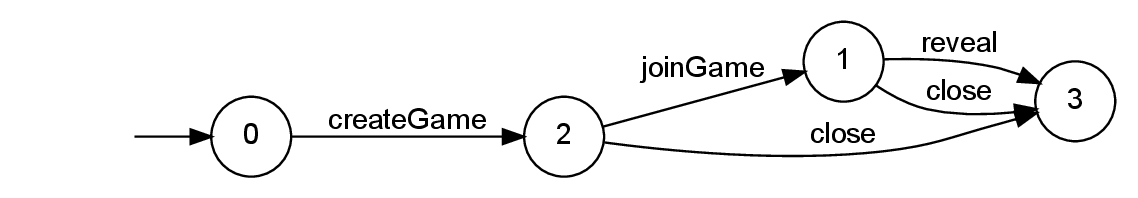}
	\caption{The mined automaton for 0xfair. Note that we exclude parametric bindings for simplicity.}\label{fig:0xfair}
\end{figure}

\begin{figure}[t]
    \centering
	\scriptsize
    % First subfigure
    \begin{subfigure}{0.32\columnwidth}
        \centering
		\includegraphics[width=\linewidth]{./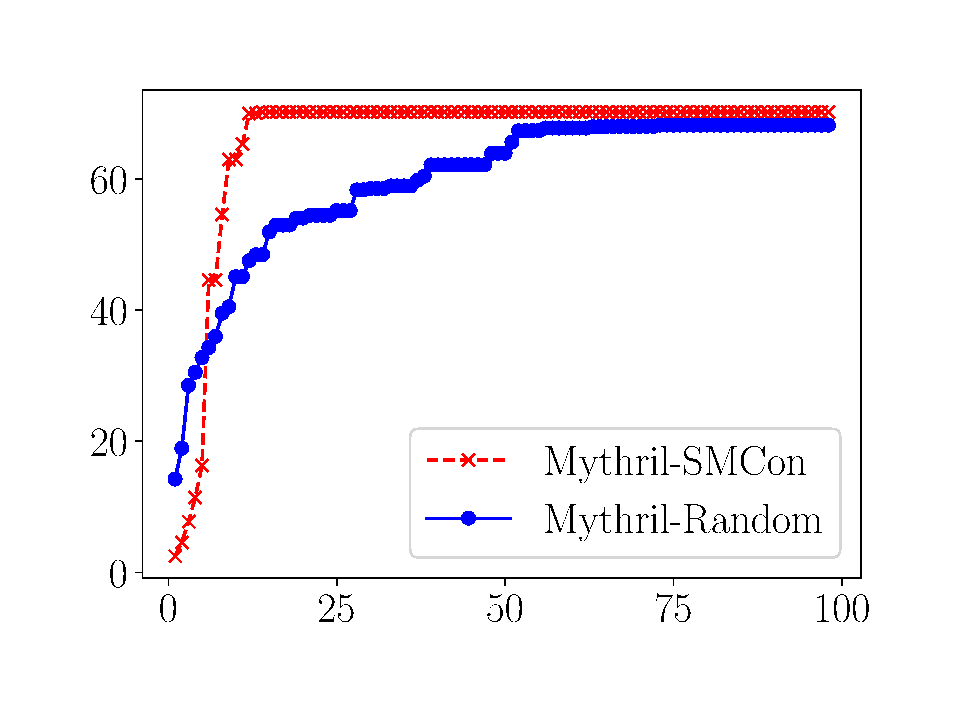}
		\vspace{-20pt}
        \caption{\scriptsize CryptoKitties}
    \end{subfigure}
    \hfill
    % Second subfigure
    \begin{subfigure}{0.32\columnwidth}
        \centering
		\includegraphics[width=\linewidth]{./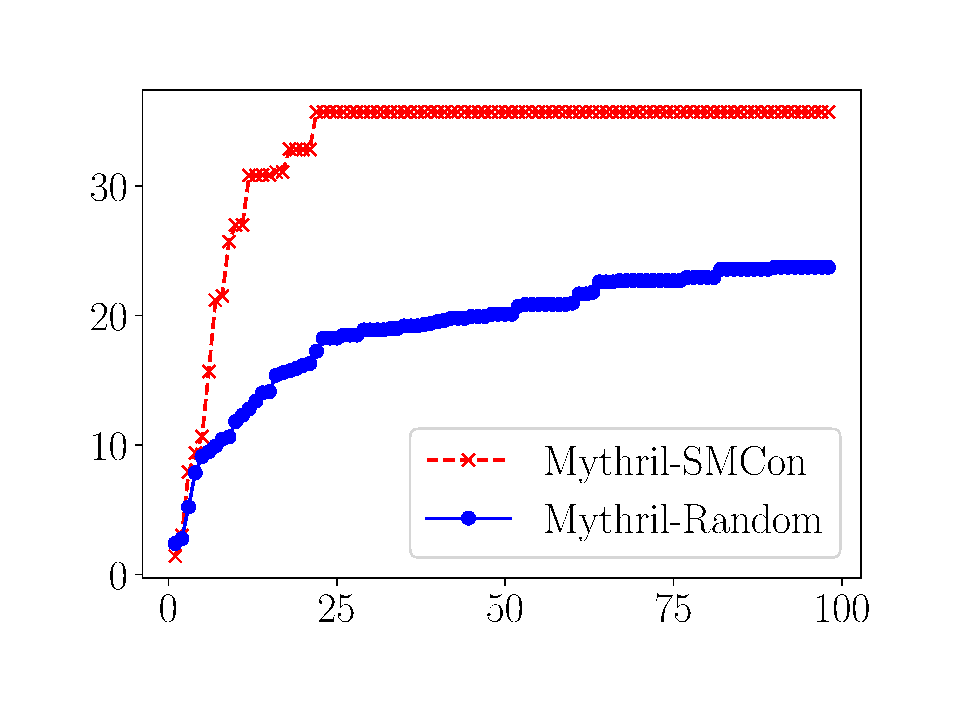}
        \vspace{-20pt}
		\caption{\scriptsize CryptoPunks}
    \end{subfigure}
	\hfill
    % Second subfigure
    \begin{subfigure}{0.32\columnwidth}
        \centering
		\includegraphics[width=\linewidth]{./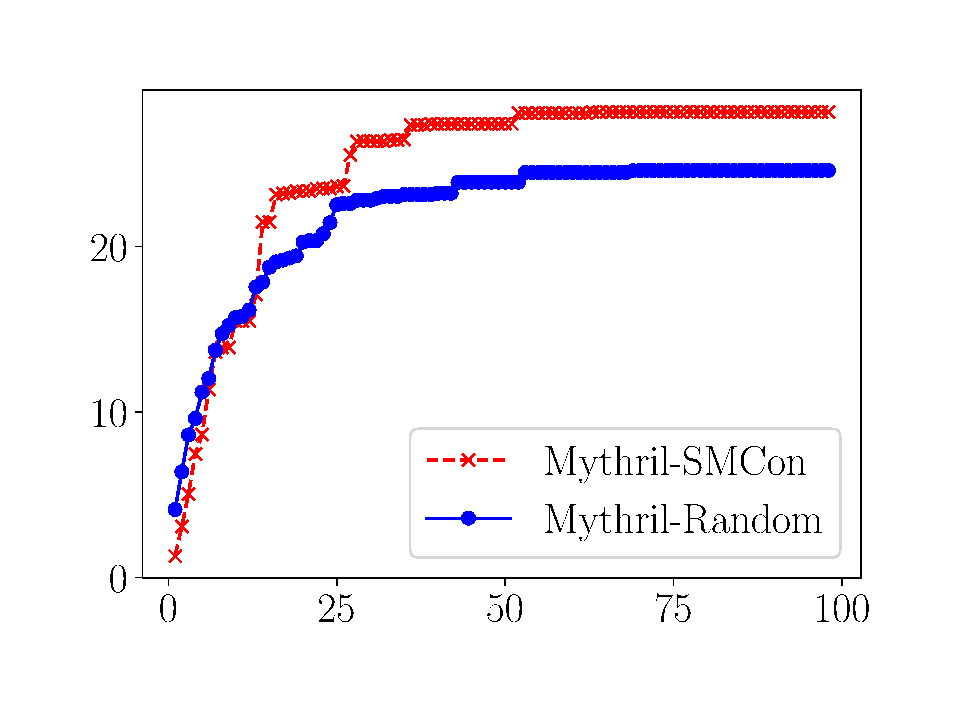}
		\vspace{-20pt}
        \caption{\scriptsize SupeRare}
    \end{subfigure}
	\hfill
	% First subfigure
	\begin{subfigure}{0.32\columnwidth}
        \centering
		\scriptsize
		\includegraphics[width=\linewidth]{./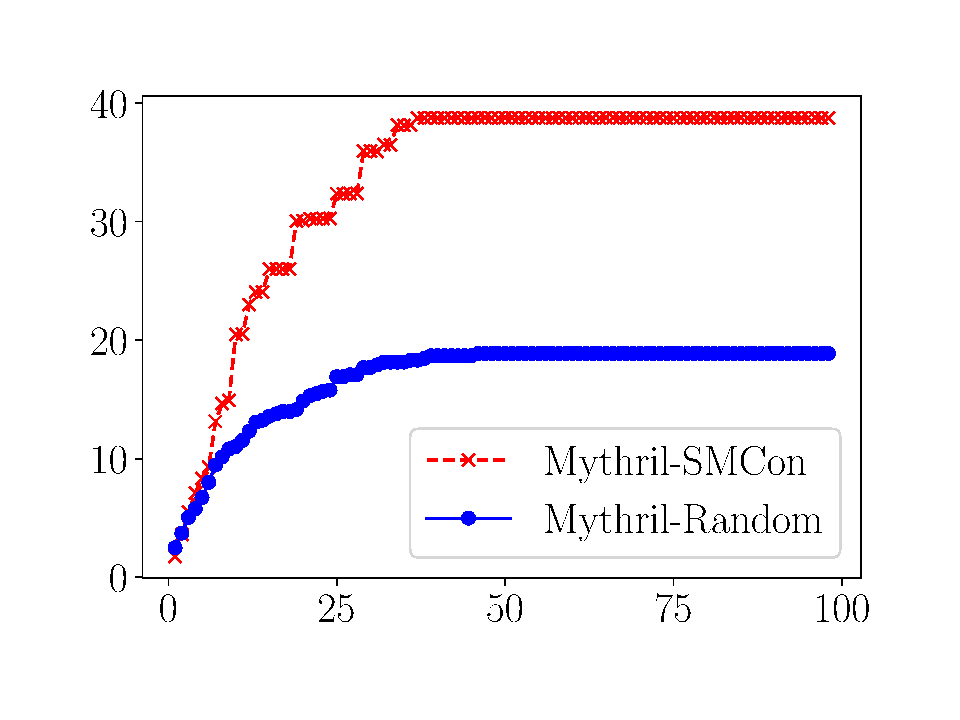}
		\vspace{-20pt}
        \caption{\scriptsize MoonCatRescue}
    \end{subfigure}
    \hfill
    % Second subfigure
    \begin{subfigure}{0.32\columnwidth}
        \centering
		\scriptsize
		\includegraphics[width=\linewidth]{./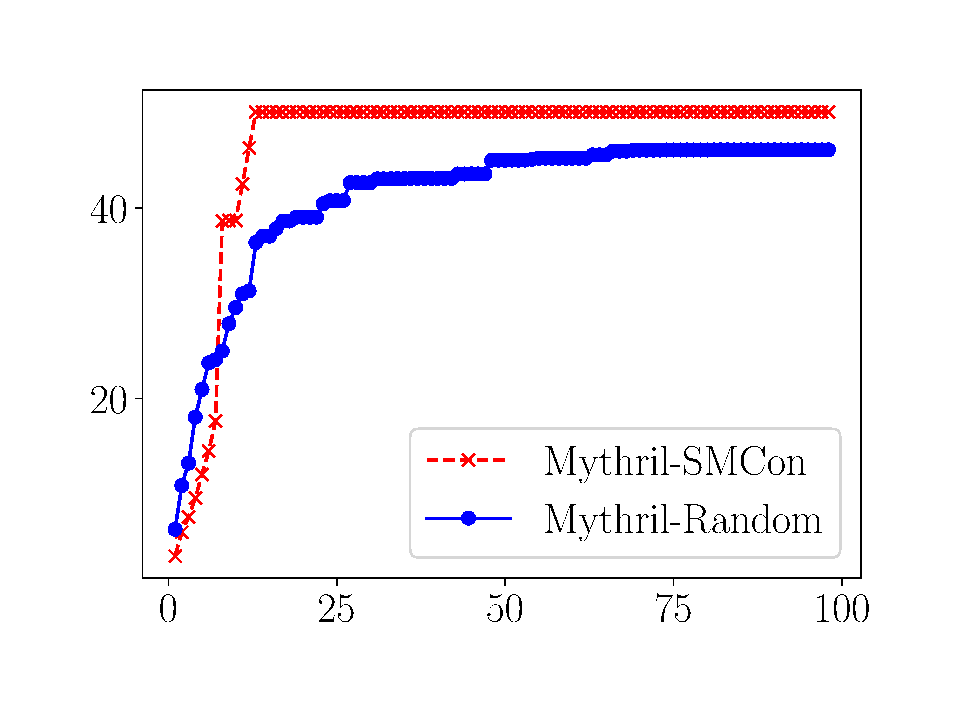}
		\vspace{-20pt}
        \caption{\scriptsize 0xfair}
    \end{subfigure}
	\hfill
    % Second subfigure
    \begin{subfigure}{0.32\columnwidth}
        \centering
		\includegraphics[width=\linewidth]{./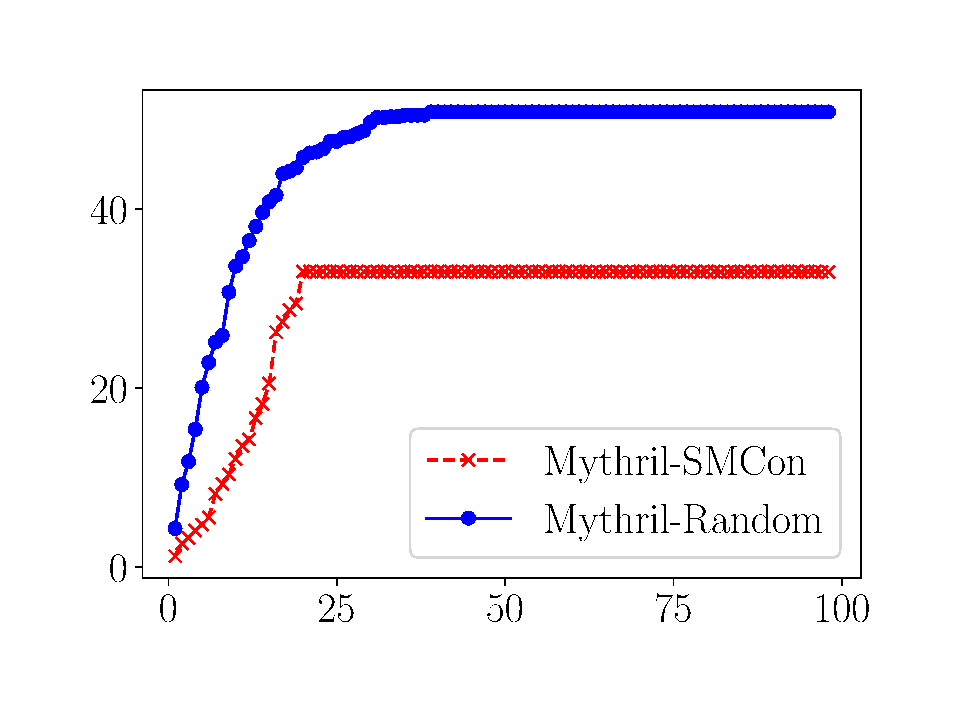}
		\vspace{-20pt}
        \caption{\scriptsize Dicether}
    \end{subfigure}
    \caption{Opcode coverage achieved with the number of function call sequences used. The x-axis and y-axis indicate the number of function call sequence and the percentage of opcode coverage, respectively.}
    \label{fig:efficiency}
\end{figure}

\subsection{RQ2. Experiment Results on Real-world Smart Contracts}
%\liu{technique overview graph including slicing configurations, history analysis, invariant detection, specification mining, and so on.}
\Cref{tab:dapp} illustrates the {automata mining} results of \tool on {six} real-world DApp contracts.
The model complexity of specifications mined varies a lot.
\code{CryptoKitties} has the simplest model with two states and three transitions.
The model can be interpreted as a regular language ``$(\mathit{createAuction} \rightarrow
\mathit{bid}\; |\; \mathit{cancelAuction})*$'', where each active auction accepts only one bid.
\Cref{fig:0xfair} shows the specifications mined for
\code{0xfair},\footnote{\url{https://etherscan.io/address/0xa8f9c7ff9f605f401bde6659fd18d9a0d0a802c5}}
which perfectly articulates the usage scenarios of a \textbf{R}ock-\textbf{P}aper-\textbf{S}cissor
game.
\code{0xfair} employs a seal mechanism to achieve fairness where nobody can cheat on others.
First, the creator encrypts his choice and publicizes the choice proof, namely, the corresponding
cryptographic signature, when creating a game via \code{createGame}.
Naturally, the second player joins this game with an explicit choice via \code{joinGame}.
Finally, the creator reveals his choice by decrypting the choice with the secret key, which is used
to determine the game winner.
In addition, a game should be closed when it expires, because no other players join or the creator
fails to reveal the choice.

The remaining four DApps have more complex models, which we omit due to limited space.
We manage to assess these mined models through their existing test suites.
We first re-ran all the test cases for each DApp and found that many test cases failed.
For example, \code{MooncatRescue} has 1,119 test cases, with 993 passing and 126 failing {mostly due to VM error messages slightly unmatched with the expected.}
% \yi{how do we deal with failed tests?}
% Specifically, CryptoPunks has 46 test cases of which 15 passed and the remaining 31 failed.
% SupeRare has 35 test cases: 34 passing and 1 failing.
% MooncatRescue has 1,119 test cases: 993 passing and 126 failing.
% Dicether has 98 test cases: 82 passing and 16 failing.
% We investigate all the failing test cases.
% Most of them are caused by the mismatch between actual transaction error message and the hard-coded expected error message, which actually does not compromise contract correctness.
% For instance, all the failed MoonCatRescue test cases are rooted in this reason.
% The others reveal a semantic difference between contract implementation and expectation, which can be detected through functional invariants inferred.
% For example, CryptoPunksMarket has 8 test cases failing due to such difference in scenarios like putting another bid, outbidding, and other edge cases.
Next, we were able to construct their ground truth specifications manually, where two of the authors spent two hours per smart contract individually.
In particular, we ensured that the tested behaviors must be included in the ground-truth specifications and added any additional untested behaviors clearly documented.
Because the transaction data used is collected from an uncontrolled blockchain environment, the
diversity of historical usage behaviors have a considerable impact on the mined model.
Our study shows that GameChannel achieved the highest precision of 93.1\,\% and recall of 97.9\,\%, respectively, while SupeRare scored the second highest recall of 96.1\,\% for SupeRare, followed by CryptoPunks achieving 80.7\,\%, and MoonCatRescue achieved the second highest precision of 85.6\,\%.
% \yi{what about others? four remaining ones, right?}
Therefore, we believe our mined models for real-world DApp contracts shall capture widely-used high-level program specifications, which can be used to enhance DApp development, e.g., uncovering issues of DApp document and test suites.

% \revise{The mined automata models can be applied to detect bugs introduced in smart contract evolution.
% Evolution, i.e., upgrades, have been widely observed in smart contracts across different blockchains, with varied upgradability patterns~\cite{keylist}.
% The code changes incurred could inadvertantly inject new bugs.
% However, finding such bugs is non-trival because there are strong dependencies related to not only function ordering but also parameter-level indexing in stateful and parametric smart contract applications.}

We study the effectiveness of the resulting automata for symbolic analysis of smart contracts.
Unlike testing, symbolic analysis often yields a more comprehensive security report by effectively exploring multiple program paths at once.
Nevertheless, symbolic analysis may face path explosion problem, which largely affects its performance.
In \Cref{tab:dapp}, we compare two usages of the state-of-the-art industrial symbolic analysis tool
named Mythril~\cite{mythril}---by providing randomly generated function call sequences, i.e.,
Mythril-Random, and by providing function call sequences generated from automata specifications
minded by \tool, i.e., Mythril-\tool.
% There are two modes of using Mythril for security analysis.
% By default, Mythril tries to perform bounded model checking to explore all function combinations of size to a given threshold and predefined timeout for constraint solving.
% Also, Mythril alows the customization of function sequences as test cases to examine certain contract behaviors.
% In our evalution,
% we compared three variants of Mythril: \emph{Random}, indicating random test sequence generation; \emph{Naive}, indicating test sequence generation with models excluding parametric information; \emph{Parametric}, indicating test sequence generation with models including parametric information.
For each contract, we cap the length of function call sequences to be five, the time budget to be
one hour while the timeout of symbolic execution of a function call sequence is set to 10 minutes.
For reliable comparison, we repeated such symbolic analysis process six times per contract.
Notice, for contract functions absent in our mined automata, we perform random selection and insert
the selected ones into the function call sequence generated by the model.

Smart contracts are compiled into opcodes executable on the Ethereum Virtual Machine.
\Cref{tab:dapp} shows the opcode coverage achieved and the number of issues reported by Mythril-Random and Mythril-\tool.
At first glance, most of the code coverage statistics seem low.
This is partly because we only test the public contract functions that could alter program states
while leaving untested the other view functions that only access program states.
The timeout setting also has an impact on this, and we will explain it later.
Overall, Mythril-\tool achieves 42.65\,\% code coverage and finds around 3 issues per contract, which is more than what Mythril-Random achieves.
Note, the issues reported by Mythril are often considered as warnings for developers to check,
which may not always reflect real vulnerabilities.
In detail, Mythril-\tool outperforms Mythril-Random in all cases except Dicether.
Moreover, Mythril-\tool is proved more likely to explore new program paths, since Mythril-\tool
displays a larger variance of the code coverage than Mythril-Random except 0xfair and Dicether.
Our further investigation shows that Dicether has two preparation functions to execute before any game-related operations, which are not included in the mined automata, but such problem can be mitigated using dependency analysis~\cite{wang2020oracle}.
For 0xfair, the program path constraints are too complicated to solve within the given timeout, which will be illustrated in~\cref{fig:speedup}.
We also perform statistics analysis using Mann Whitney U-test to show the significance level of the experiment result and Vargha and Delaney's A12 statistical test to determine the extent to which Mythril-\tool outperforms Mythril-Random.
The results in~\cref{tab:dapp} indicate that in terms of code coverage or number of issues reported, Mythril-\tool usually performs better than Mythril-Random in 4 out of 6 cases, with resulting $\hat{A_{12}}$ scores exceeding 0.6 and p-values being smaller than or close to a significance level of 0.05.
We also delve into how \tool promote efficiency of symbolic analysis in achieving good opcode
coverage with much less function call sequences.
As shown in~\cref{fig:efficiency}, except Dicether, \tool helps Mythril reach a higher opcode
coverage with less number of function call sequences compared to its random counterpart,
highlighting the usefulness of the specifications mined.
\revise{We acknowledge that there are many studies that improve fuzzing effectiveness by incorporating valuable feedback information from static analysis~\cite{grieco2020echidna} and dynamic analysis~\cite{wang2020oracle, shou2023ityfuzz}. The high-level behavior automata mined by~\tool align with this field and complement these existing fuzzing tools.}

To speed up symbolic analysis, we could also enforce the trace slice setting by fixing trade
session parameter, e.g., gameId, to a constant because most trade sessions are homogeneous,
non-interleaving and symbolic analysis of a trade session should suffice.
To investigate this impact, we sampled the function call sequences derived from the mined automata
of the DApp contracts where each sequence represents a particular trade scenario.
\Cref{fig:speedup} draws the overall time consumptions for default and trace slice setting, where for each setting, we symbolically execute each sequence five times.
Trace slice setting takes smaller time for all cases except CryptoKitties, where MoonCatRescue has
56\% speedup, followed by 0xfair's 36\%.

In summary, the automata infered by \tool about high-level program behaviors is critical to reduce
the burden of symbolic analysis for complicated smart contracts, and it can complement existing
speedup techniques, such as predicting unsatisfiable symbolic path with machine learning
models~\cite{yang2024exploring}.

\begin{figure}
	\centering
	\includegraphics[width=.9\columnwidth]{./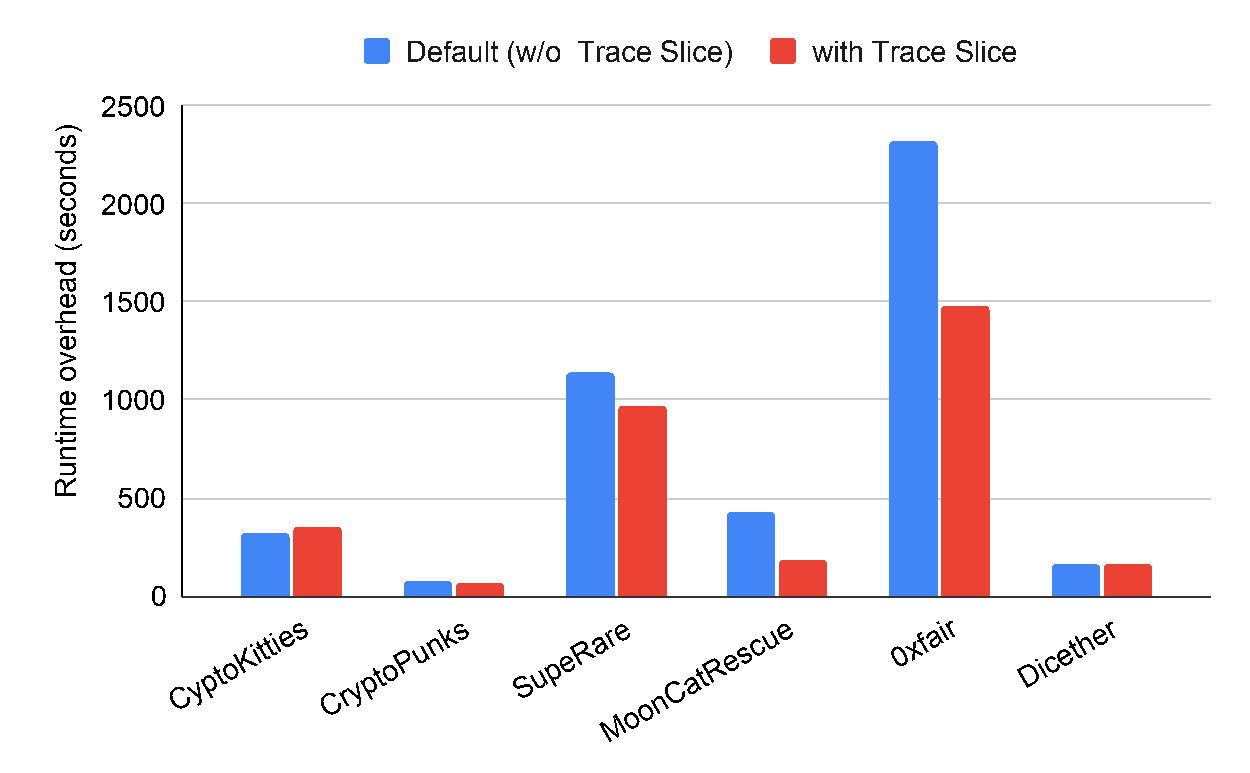}
	\caption{Time consumption of symbolic analysis with or without enforcing trace slicing.}
	\label{fig:speedup}
\end{figure}

\subsection{RQ3. Implications in DApp Development}
\paragraph{Outdated Documentation}
Management of documentation and ensuring its consistency with contract implementation is often
labor-intensive.
For instance, \code{Dicether} is a gambling game running on Ethereum, first launched in 2018.
%\Cref{fig:git} visualizes addition and deletion activities per week of Dicether's GitHub repository~\footnote{https://github.com/dicether/contracts}.
%Interestingly, the discrete peaks in \cref{fig:git} indicate that a typical contract maintenance is not continuous.
We investigated and collected eight contract versions of Dicether so far from DAppRadar~\cite{dappradar} and Etherscan~\cite{Etherscan} where  maintenance occurs the most in its first year and each contract version lived for about two months on average.
% \Cref{tab:instances} lists contract address, the according repository commit version, the Solidity version used, the date of deployment and last transaction of eight contracts as of the time of writing.
% It is clear that maintenance frequently happened in 2018 and each contract lived for about two months on average.
% In contrast, the last three contracts remain relatively stable and alive for quite long time.
By differentiating between these contract versions, we found some contract maintainence performs
only routine tasks, e.g., minimal patching for security and reliability considerations.
In contrast, some maintenance updates introduce substantial changes to business logics.
When comparing its first contract
version\footnote{\url{https://etherscan.io/address/0xc95d227a1cf92b6fd156265aa8a3ca7c7de0f28e}}
with the seventh contract
version,\footnote{\url{https://etherscan.io/address/0xaec1f783b29aab2727d7c374aa55483fe299fefa}}
we noticed function renaming changes, e.g., \code{playerCancelActiveGame} to
\code{userCancelActiveGame}.
Additionally, for the original contract version, the server user or normal player cannot perform any operation until a created game session is accepted.
However, this business logic was removed in the seventh contract version.
Yet, the only formal documentation~\cite{dicether} of the \code{Dicether} design is outdated and
can no longer reflect the program behaviors of the recently used contract versions.
To summarize, we believe \tool could mine high-level automata for evolving smart contracts that can
facilitate developers to track new changes easily and maintain high-quality documentation.

\paragraph{Test Suite Bias}
Developing test suites for smart contracts is non-trivial since developers usually have little knowledge of how smart contracts are used after contract deployment to blockchains, thus crafting test cases for functions that are rarely used could waste human efforts and missing test cases for functions that are heavily used could leave a room for security risks.
For example,
\code{CryptoPunks} is one of the earliest examples of using Non-Fungibale Tokens (NFTs) on Ethereum, which inspired the ERC-721 standard to some extent.
CryptoPunks has 10,000 unique collectible characters called punks, with proof of ownership stored on Ethereum~\cite{cryptopunks}.
% \Cref{fig:cryptoPunkMarket} shows part of the mined automaton by \tool.
To start with, function
\texttt{getPunk} or \texttt{setInitialOwners} of CryptoPunks is called to assign punks to users.
Users can transfer the ownership of a punk by calling \texttt{transferPunk}.
Users can make a bid to a punk via \texttt{enterBidForPunk}.
CryptoPunks has a set of test suites in its GitHub repository covering seven use scenarios such as
setting the initial owner(s) of punks or opening a sale for punks~\cite{cryptopunkgittest}.
However, in the existing test suites, there is only one test case for \texttt{setInitialOwners},
while the other test cases all focus on \texttt{setInitialOwner}.
An interesting observation is that, based on transaction histories, the contract manager always
use \texttt{setInitialOwners} to initialize a batch of punks for a group of owners instead of
\texttt{setInitialOwner} for individual assignment.
Our mined automata highlights this disproportional focus on rarely used functions,
while inadequate tests were written for more frequently used functionalities.
For example, it may be expected to test \texttt{setInitialOwners} whether a punk assigned to one owner can be wrongly overwritten by a succeeding owner in the same group, which is indeed not enforced in the current contract implementation.

\paragraph{Threats to validity}
An internal threat is potential errors in the manually derived ground truth for DApp contract specifications. To mitigate this, we collected well-documented smart contracts from popular DApp projects and re-ran their test suites. Additionally, our tool implementation and experimental scripts might contain bugs. Two authors closely collaborated on the tool and reviewed the code regularly. We also checked for outliers in the results, uncovering and fixing a few bugs.
Externally, our findings may not generalize to all DApp smart contracts. To address this, we selected representative DApps from various application domains.

\section{Related Work}
\label{sec:related}
\paragraph{Smart Contract Specification Mining}
Several tools have been developed for mining smart contract specifications, which can be categorized into low-level functional specifications~\cite{tan2022soltype, liu2022learning, wang2024smartinv, Liu2022IAD, liu2024automated} and high-level behavioral specifications~\cite{guth2018specification}.
SolType~\cite{tan2022soltype} focuses on Solidity smart contracts, allowing developers to add refinement type annotations for static analysis of arithmetic operations. While it effectively detects issues like integer overflows, it is limited to contract-level arithmetic invariants. Cider~\cite{liu2022learning} extends SolType by using deep reinforcement learning to infer contract invariants, but these remain unverified. SmartInv~\cite{wang2024smartinv} takes a multimodal learning approach to infer invariants that identify hard-to-detect bugs. InvCon~\cite{Liu2022IAD} and InvCon+~\cite{liu2024automated} infer invariants from blockchain transactions, with VeriSol~\cite{wang2019formal} verifying their correctness. However, these tools only infer invariants at function boundaries and do not capture higher-level state transitions.

Guth et al.~\cite{guth2018specification} mine specifications by slicing transaction histories into independent sequences and constructing a finite state machine (FSM) based on data dependencies. Our approach differs in two key ways: (1) we use test suite-based contract interaction patterns for more precise slicing, and (2) we mine extended finite state machines (EFSM), which are more expressive than traditional FSMs.

\paragraph{Automata Mining}
Automata mining has a rich history~\cite{de2010grammatical, le2018deep, aarts2012automata, krka2014automatic, walkinshaw2016inferring, biermann1972synthesis, lorenzoli2008automatic, beschastnikh2011leveraging}. Traditional approaches, such as grammar inference~\cite{gold1967language} and counterexample-guided abstraction refinement (CEGAR)~\cite{clarke2000counterexample}, have been applied to learn behavioral models of systems.
Aarts et al.~\cite{aarts2012automata} built on the L* algorithm~\cite{angluin1987learning} to generate restricted EFSMs from dynamic execution traces. RPNI-MDL~\cite{de2010grammatical} merges states based on the minimum description length principle, but only works with positive traces. The k-tail algorithm~\cite{biermann1972synthesis} and its extensions~\cite{lorenzoli2008automatic, krka2014automatic} merge states based on trace suffixes and incorporate input predicates to capture data relations. Krka et al.~\cite{krka2014automatic} developed TEMI to mine more complex EFSMs, while Synoptic~\cite{beschastnikh2011leveraging} uses temporal invariants before applying k-tail.

Other methods such as \contractor~\cite{de2010automated, krka2014automatic} infer FSMs from program invariants, and Le and Lo~\cite{le2018deep} use deep learning for automata specification generation. Despite the success of these techniques, smart contract specification mining presents unique challenges, particularly the dynamic, stateful environment of real-world smart contracts. The interaction between users and the system, reflected in transaction histories, requires specialized techniques like trace slicing to extract meaningful specifications. This complexity differentiates smart contract mining from traditional mining approaches.
%!TEX root = ../main.tex
%<<<<<<< HEAD
\section{Conclusion}\label{sec:conclusion}

In this paper, we have formally defined the specification mining problem for smart contracts and
proposed a CEGAR-like approach to mine automata specifications based on past transaction histories.
The mined specifications capture not only the allowed function invocation sequences, but also the
inferred program invariants describing contract semantics precisely.
Such contract specifications are useful in contract understanding, testing, verification, and
validation.
Our evaluation results show that our tool, \tool, mines specifications accurately and efficiently;
it may also be used to enhance symbolic analysis for smart contracts and facilitate developer in maintaining high-quality document and test suites.
%=======
%\section{Conclusion}
%\label{sec:conclusion}
%In this paper,
%we present a specification miner,~\tool, mining contract-level automata specification for smart contracts.
%\tool relies on historical transactions and \cegar to obtain automata specification satisfying \minimalabstraction.
%The evaluation results on six real world DApp smart contracts
%indicates that \tool is able to mine precise automata specification.
%Our approach can also be used to identify the defects in contract documentation and generate test cases for better coverage.
%>>>>>>> add conclusion

\section*{Acknowledgement}
We thank all the anonymous reviewers for their constructive feedback on this work.
This work was supported by the Nanyang Technological
University Centre for Computational Technologies in Finance
(NTU-CCTF). Any opinions, findings, and conclusions or
recommendations expressed in this material are those of the
author(s) and do not necessarily reflect the views of NTU-CCTF.

% \nocite{*}
\bibliographystyle{IEEEtran}
\balance
\bibliography{references}

\cleardoublepage
\appendix
\subsection{Illustrations}\label{subsec:illustrationexample}

\begin{figure*}
	\begin{subfigure}[b]{.2\textwidth}
		\centering
		 \includegraphics[width=.9\columnwidth]{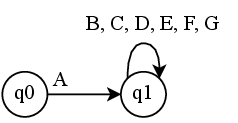}
		\caption{The initial FSM.}\label{fig:initialautomaton}
	\end{subfigure}
	\hspace{-1mm}
	\begin{subfigure}[b]{.3\textwidth}
		\centering
		\includegraphics[width=0.9\columnwidth]{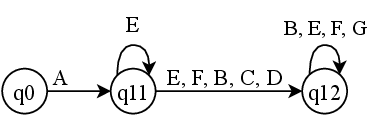}
		\caption{The second FSM.}\label{fig:secondautomaton}
	\end{subfigure}
	\hspace{-1mm}
	\begin{subfigure}[b]{.4\textwidth}
		\centering
		\includegraphics[width=.9\columnwidth]{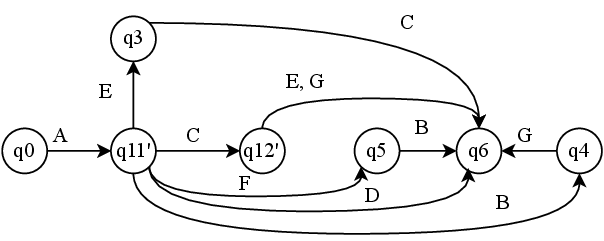}
		\caption{The final FSM.}\label{fig:finalautomaton}
	\end{subfigure}
	\caption{The illustration of automata mining process.}
\end{figure*}

%\begin{wrapfigure}{r}{0.3\columnwidth}
%\begin{center}
%  \includegraphics[width=.2\columnwidth, trim=0 10 0 20]{}
%\end{center}
%\caption{The initial automaton.}\label{fig:initialautomaton}
%%\vspace{-.1in}
%\end{wrapfigure}

We use \texttt{GameChannel} as an example.
Initially, rule \textsc{Init} is applied.
As shown in~\cref{fig:initialautomaton}, the resulting automaton has two states: $q_0$ and $q_1$.
For the initial state $q_0$, all state variables are valued zero, namely $P_1 \land P_6 \land P_8
\land P_{10}$, while $q_1$ covers all the remaining values of the state variables, namely $\lnot
(P_1 \land P_6 \land P_8 \land P_{10}) $.
For each function $m$, its precondition $g_m$ is added as the
guard function set $G$, and the post-condition $u_m$ is added
to the update function set $U$, which we elide in these diagrams for simplicity.

Rule \textsc{Construct} adds all possible transitions to the automaton.
It is easy to see in~\cref{fig:initialautomaton} that the resulting automaton contains all the
observed concrete function invocation sequences.
However, spurious behaviors may have been allowed in the automaton.
For example, the symbolic path $q_0-A-q_1-D-q_1-C-q_1$ is spurious, as it has no supported concrete
invocation sequence within the observations.
%The rest of the rules will try to refine the automaton in different ways to remove either unreachable states, or spurious paths.
%The \textsc{RmState} rule will remove unreachable states and those relevant stale transitions.
Therefore, we need to apply rule \textsc{RmPath} to eliminate such spurious paths.
In rule \textsc{RmPath}, we use the function \texttt{concretize} on a symbolic path to represent
one of its possible concrete invocation sequences.
For the spurious path $q_0-A-q_1-D-q_1-C-q_1$,
the automaton needs to refine itself by either splitting the state $q_1$ ($q_n$ in
Line~\ref{line:sat-q}) or removing the transition $q_1-C-q_1$ ($t_{n+1}$ in Line~\ref{line:minust})
using~\cref{algo:split-remove}.
%For \textsc{RmPath} rule, we use \texttt{concretize} function of a symbolic path to represent one of its possible concrete invocation sequences.
%For the spurious path $q_0-A-q_1-D-q_1-C-q_1$,
%the automaton needs to refine itself by either splitting state $q_1$ ($q_n$) or removing transition $q_1-C-q_1$ ($t_{n+1}$) using~\cref{algo:split-remove}.
%>>>>>>> restruct section 5

%\begin{wrapfigure}{r}{0.4\columnwidth}
%\begin{center}
%	\includegraphics[width=0.3\columnwidth,trim=0 10 0 20]{}
%\end{center}
%\caption{The second automaton.}\label{fig:secondautomaton}
%\vspace{-.05in}
%\end{wrapfigure}

\Cref{algo:split-remove} shows our approach to remove a spurious path via state splitting or transition removal.
For the state $q_1$ and the transition $q_1-C-q_1$ in~\cref{fig:initialautomaton},
we first attempt to split $q_1$ with the precondition of function $C$, namely $P_2 \lor (P_3 \land P_6)$ of \texttt{serverCancelActiveGame}.
We use a decision procedure $\mathrm{SAT}$ (Line~\ref{line:sat}) to decide if a symbolic state is
feasible.
Due to the fact that the two newly created states (Lines~\ref{line:split-1}--\ref{line:split-2}) for $q_1$ are feasible,
we replace $q_1$ with the new states (Line~\ref{line:sat-q}) and remove the outdated transitions starting or ending with $q_1$ (Line~\ref{line:sat-t}).
The modified automata structure will be returned (Line~\ref{line:return}) and when applying rule \textsc{Construct} again, we can obtain such new automaton in~\cref{fig:secondautomaton} that has three states.
%=======
%\Cref{algo:split-remove} shows our approach how to remove a spurious path via state splitting or transition removal.
%For the state $q_1$ and the transition $q_1-C-q_1$ in~\cref{fig:initialautomaton},
%We first attempt to split $q_1$ with the precondition of function $C$, namely $P_2 \lor (P_3 \land P_6)$ of \texttt{serverCancelActiveGame}.
%We use \texttt{SAT} function (Line~\ref{line:sat}) to decide if a symbolic state is feasible.
%Due to the fact that the two newly created states (Lines~\ref{line:split-1}--\ref{line:split-2}) for $q_1$ are feasible,
%we replace $q1$ with the new states (Line~\ref{line:sat-q}) and remove the outdated transitions starting or ending with $q_1$ (Line~\ref{line:sat-t}).
%The modified automata structure will be returned (Line~\ref{line:return}) and when applying \textsc{Construct} rule again, we can obtain such new automaton in~\cref{fig:secondautomaton} that has three states.
%>>>>>>> restruct section 5
In~\cref{algo:split-remove}, it is possible that $q_n$ is not splittable with the function precondition (Lines~\ref{line:not-sat-begin}--\ref{line:not-sat-end}).
When $t_{n+1}$ has never been visited by replaying all the concrete invocation sequences on the automaton (Line~\ref{line:notreachable}), we safely remove $t_{n+1}$ to eliminate the spurious path.
Otherwise, we will split $q_n$ with an equivalent predicate representing a partial set of the visited concrete states (such as $s_0$ to $s_{17}$ of~\cref{tab:Dicether-LOG}) along a symbolic path $\pi'$ (Lines~\ref{line:other-start}--\ref{line:other-split})
and also remove the outdated transitions relevant to $q_n$ (Line~\ref{line:other-remove}).

%\begin{figure}
%\centering
%\includegraphics[width=.85\columnwidth,trim=0 10 0 20]{}
%\caption{The final automaton.}\label{fig:finalautomaton}
%\end{figure}

\Cref{fig:secondautomaton} also contains spurious paths, e.\,g., $q_0-A-q_{11}-E-q_{11}-E-q_{12}$,
which have to be eliminated.
%=======
%\Cref{fig:secondautomaton} also contains spurious paths such as $q_0-A-q_{11}-E-q_{11}-E-q_{12}$ to be eliminate.
%>>>>>>> restruct section 5
By applying the aforementioned rules many times, finally we are able to mine an automaton as shown
in~\cref{fig:finalautomaton}, which has the exact automata structure as \cref{fig:dicether-spec}
except the guard condition that we elide for simplicity.
Apart from the common usage scenarios, we can also understand the root semantics of each state in
the automaton.
Note that each symbolic state is represented by a set of predicates.
As described before, $q_0$ is represented by $P_1 \land P_6 \land P_8 \land P_{10}$.
In the automaton,
$q_{11}'$ is represented by $P_2$ that says \texttt{status} is \texttt{ACTIVE}, $q_{12}'$ is represented by $P_4 \land P_6$ that says \texttt{status} is \texttt{SERVER\_INITIATED\_END} and \texttt{roundId} is zero, $q_3$ is represented by $P_3\land P_6$ that says \texttt{status} is \texttt{USER\_INITIATED\_END} and \texttt{roundId} is zero,
$q_4$ is represented by $P_4 \land P_7$ that says \texttt{status} is \texttt{SERVER\_INITIATED\_END} but \texttt{roundId} is larger than zero, $q_5$ is represented by $P_3 \land P_7$ that says \texttt{status} is \texttt{USER\_INITIATED\_END} but \texttt{roundId} is larger than zero,
and $q_6$ having no outcoming transitions suggests it is the final state whose predicates cover all the remaining cases.
We believe such automata precisely model smart contract semantics, which is helpful
for understanding contract behaviors.

\subsection{Termination and Soundness}\label{subsec:soundness}

We now discuss the termination condition of our specification mining rules shown in
\cref{fig:rules}.
To ensure the termination of our algorithm, the applications of the rules have to follow a
\emph{fair scheduling}.
A fair scheduling is an infinite sequence of actions: $a_1, a_2, \ldots, a_n$, where $a_i \in
\{\textsc{Construct}, \textsc{RmPath}\}$ and the following conditions apply:
\begin{enumerate}[leftmargin=*,label=(\arabic*)]
\item \textsc{Construct} is the initial action,
\item if new states are added, \textsc{Construct} keeps applying until it is inapplicable,
%	\item (3) \textsc{RmState} will be the next rule after \textsc{Construct} and it will keep
%applying until it is inapplicable.
\item \textsc{RmPath} appears infinitely often.
\end{enumerate}

Condition (1) enables the initial construction of the automaton, and Condition (2) ensures the
integrity of any subsequent automata construction.
Condition (3) ensures that every spurious path of the automaton will eventually be removed.
%>>>>>>> update to approach on the rules
We are now ready to prove the termination of our specification mining algorithm with the
aforementioned fair conditions.
We assume that the algorithm terminates when \textsc{RmPath} and \textsc{Construct} are both no
longer applicable.

\begin{theorem}\label{thm:termination}
Our specification mining algorithm terminates after a finite number of actions in any fair
scheduling.
\end{theorem}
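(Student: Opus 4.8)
The goal is to show that along any fair schedule only finitely many actions can occur, so that the schedule reaches a configuration in which neither \textsc{Construct} nor \textsc{RmPath} applies --- which by assumption is exactly termination. The bedrock observation is that the candidate predicate set $P$ produced in \cref{sec:predicate} is \emph{finite and fixed} before the \cegar loop starts. Every symbolic state ever created is a Boolean combination of literals over $P$: the \textsc{Init} rule creates $q_0$ and $\lnot q_0$, and \textsc{SplitRemove} only ever intersects an existing state with $g_m$, $\lnot g_m$, or $\mathrm{Pred}(S_{q_n\mid\pi_n'})$, each of which is over $P$. Hence there are only finitely many distinct symbolic states; for any fixed state set $Q$ there are only finitely many candidate transitions $(q_i,g_m,e_m,u_m,q_j)$ (call this finite set $T_{\max}(Q)$); and the $P$-definable partitions of the concrete state space $D_{X_2}$ form a finite lattice of bounded height under refinement.

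\textbf{Outer bound: finitely many splits.} First I would show that each state split strictly refines the current partition. When \textsc{SplitRemove} replaces $q_n$ by $q_n\land g_m$ and $q_n\land\lnot g_m$, the two SAT checks (\cref{algo:split-remove}) guarantee both classes are non-empty, so the partition strictly refines; the same holds when $q_n$ is replaced by $\mathrm{Pred}(S_{q_n\mid\pi_n'})$ and $q_n\land\lnot\mathrm{Pred}(S_{q_n\mid\pi_n'})$ (Lines~\ref{line:other-start}--\ref{line:other-split}), provided $S_{q_n\mid\pi_n'}$ is a proper non-empty sub-collection of the concrete states abstracted by $q_n$. A strictly ascending chain in the finite refinement lattice has bounded length, so state splitting occurs only finitely often over the whole run.

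\textbf{Inner bound: each partition phase is finite.} Now consider a maximal sub-run between two consecutive splits (or before the first, or after the last). Throughout it the state set $Q$, hence $T_{\max}(Q)$, is frozen. By the second fair-scheduling condition, \textsc{Construct} is pushed to its fixpoint when new states appear; inside this phase no new states appear, so \textsc{Construct} contributes at most $|T_{\max}(Q)|$ actions (each strictly enlarges $T$), after which the only remaining moves are \textsc{RmPath} applications that do not split --- i.e.\ transition removals (\cref{algo:split-remove}, Line~\ref{line:minust}) --- each strictly shrinking $T$. Since $|T|$ is bounded below by $0$ and is not re-enlarged in this phase, only finitely many such removals happen. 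Thus every phase contributes finitely many actions; a finite union of finite phases together with the finitely many split actions is finite, and when the fair run stops no rule is applicable, which is the termination condition.

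\textbf{Main obstacle.} The delicate part is ruling out a \textsc{Construct}/\textsc{RmPath} ping-pong, since \textsc{Construct} eagerly re-adds any theoretically feasible transition while \textsc{RmPath} keeps deleting unsupported ones. Two facts must be nailed down. (i) Under a fair schedule, \textsc{Construct} is effectively invoked only finitely often --- once after \textsc{Init} and once to fixpoint after each of the finitely many splits --- so a transition deleted inside a phase is not reinstated in that same phase; this requires reading the fair-scheduling discipline (conditions (1)--(2)) as precluding a gratuitous re-run of \textsc{Construct} after a pure transition removal. (ii) The split taken in the ``otherwise'' branch of \cref{algo:split-remove} is always a \emph{strict} refinement, which requires $S_{q_n\mid\pi_n'}$ to be a proper non-empty sub-collection of $q_n$'s concretizations and $\mathrm{Pred}(\cdot)$ to stay within Boolean combinations of the fixed $P$ so the refinement lattice remains finite. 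Granting (i) and (ii), the phase decomposition above goes through and the theorem follows.
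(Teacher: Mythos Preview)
Your phase decomposition---bound the number of state splits, then bound the number of \textsc{Construct}/\textsc{RmPath} actions within each fixed-$Q$ phase---is exactly the skeleton the paper uses. The paper also relies implicitly on your reading (i) of the fair schedule: after a pure transition removal \textsc{RmPath} keeps going rather than \textsc{Construct} re-inserting the just-deleted edge (the paper writes ``\textsc{RmPath} will keep applying until it splits a state'').

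Where you diverge is in the outer bound. You bound the number of splits by the height of the refinement lattice of $P$-definable partitions, which needs your assumption (ii): that $\mathrm{Pred}(S_{q_n\mid\pi_n'})$ is always a Boolean combination of the fixed predicate set $P$. The paper does \emph{not} assume this and, as written, \cref{algo:split-remove} gives no such guarantee---$\mathrm{Pred}(\cdot)$ is meant to characterise an arbitrary finite set of observed concrete states and may well introduce fresh atomic constraints outside $P$. So (ii) is a genuine gap, not just a technicality.

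The paper closes this gap differently: it bounds the number of symbolic states (hence splits) by $N_s$, the number of \emph{concrete} states appearing in the finite history \history, obtaining $\max(n)\le N_s$ and an overall bound of $(N_s+1)\cdot\max(n)\cdot|M|\cdot\max(n)$ on \textsc{RmPath} actions. The point is that every split in the ``otherwise'' branch separates observed concrete states of $q_n$ into a proper non-empty subset and its complement, so that branch can fire at most $N_s$ times regardless of whether the induced predicate lies in the $P$-algebra; the $g_m$-splits are bounded because there are only finitely many function preconditions. If you replace your $P$-lattice argument with this history-based count, your proof goes through and in fact coincides with the paper's.
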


\begin{proof}
	% We first prove \textsc{RmState} and \textsc{RmTransition} would terminate after a finite actions.
	For \history which is equivalent to the observed concrete invocation sequences, let $N_s$ be the number of concrete states and $N_p$ be the number of all concrete invocation sequences.
	%According to \cref{algo:RemoveUnseenPath} of \textsc{RmPath},
	For each application of rule \textsc{RmPath},
	it will take no more than $N_p + 1 $ times to traverse the resulting automaton to find such a spurious path $\pi$ if it exists.
	Therefore, \textsc{RmPath} itself is terminated.

	As discussed in~\cref{algo:split-remove},
	a transition may be removed from the resulting automaton.
	According to the fairness conditions,
	\textsc{RmPath} will keep applying until it splits a state to generate two new fine-grained states.
	Suppose the current automaton has $n$ symbolic states and $k$ symbolic state transitions.
	We easily have $k \le n \times |M|\times n$, where $|M|$ is the number of contract functions.
	Therefore, the times for \textsc{RmPath} to find one more reachable states is no more than $n \times |M|\times  n$.
	So it takes no more than $N_s \times \max(n) \times |M|\times \max(n)$ times to find all reachable states, where $\max(n) \leq N_s$.

	%When two or four fine-grained states are provided,
	%\textsc{Construct} and \textsc{RmState} will construct new \efsm.
	Next, we prove that the number of \textsc{RmPath} actions in any scheduling cannot exceed
  $(N_s + 1) \times \max(n) \times |M| \times \max(n)$.
	We know that finding all reachable states does not exceed $N_s \times \max(n) \times |M| \times  \max(n)$.
	After that, \textsc{RmPath} will be applied to remove any unreachable transition if applicable.
	We know this will apply no more than $\max(n) \times |M| \times \max(n)$ times.
	Finally, \textsc{RmPath} will be inapplicable and the algorithm will terminate with no more than $(N_s + 1 )\times \max(n) \times |M| \times \max(n)$ actions of \textsc{RmPath}.
	%Therefore, the number of \textsc{RmPath} actions in any scheduling cannot exceed $(N_s + 1 )\times \max(n) \times \max(n)$.
	So, \cref{thm:termination} is proved.
\end{proof}

Under the fair scheduling, we prove the soundness of the resulting automaton.

\begin{theorem}\label{thm:soundness}
  The resulting automaton (ignoring loop transitions) is a \minimalabstraction of \history.
\end{theorem}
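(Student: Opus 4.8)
The plan is to check, for the \efsm $\langle q_0, Q, \Sigma, G, U, T\rangle$ returned when the algorithm of \cref{fig:rules} terminates, the two conditions of \cref{def:minimal} with respect to \history regarded as an \lts whose state set is the collection of concrete states occurring in the observed invocation sequences and whose transition relation $\delta$ consists of the observed one-step moves, and with $\alpha$ taken to be the map sending a concrete state $s$ to the unique $q \in Q$ whose defining predicate conjunction $s$ satisfies. Two invariants are the backbone of the argument: a \emph{partition invariant} --- that $Q$ always induces a partition of the concrete state space, so that $\alpha$ is a well-defined total function --- and an \emph{over-approximation invariant} --- that whenever \textsc{Construct} is inapplicable, every observed invocation sequence $(s_0, e_0(\theta_0), s_1), \dots, (s_{n-1}, e_{n-1}(\theta_{n-1}), s_n)$ has the word $(\alpha(s_0), g, e_0, u, \alpha(s_1)), \dots, (\alpha(s_{n-1}), g, e_{n-1}, u, \alpha(s_n))$ among the paths of the current \efsm.

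The partition invariant holds after \textsc{Init} (which yields $\{q_0, \lnot q_0\}$) and is preserved by \textsc{SplitRemove} (\cref{algo:split-remove}), since every split replaces a state $q_n$ by the two states $q_n \land \varphi$ and $q_n \land \lnot \varphi$ for a formula $\varphi$ (either the precondition $g_m$ on Lines~\ref{line:split-1}--\ref{line:split-2}, or $\mathrm{Pred}(S_{q_n\mid\pi'_n})$ on Line~\ref{line:other-split}), which again cover precisely the region of $q_n$; \textsc{Construct} leaves $Q$ untouched. For the over-approximation invariant I would argue by induction on the number of rule applications. After the initial block of \textsc{Construct} steps it holds, because for each observed adjacent pair $s_i \xrightarrow{e_m} s_{i+1}$ the inferred precondition $g_m$ is satisfied by $s_i$ and the inferred post-condition $u_m$ by $s_{i+1}$ (they hold on \emph{every} observed invocation of $e_m$, in particular this one), so $\alpha(s_i) \land g_m$ and $\alpha(s_{i+1}) \land u_m$ are satisfiable and \textsc{Construct}, applied to saturation, adds the transition $(\alpha(s_i), g_m, e_m, u_m, \alpha(s_{i+1}))$. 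The step case reduces to \textsc{SplitRemove} followed by the forced re-saturation with \textsc{Construct} mandated by fairness condition~(2): when \textsc{SplitRemove} deletes $t_{n+1}$ on Line~\ref{line:minust} it has first verified that no observed sequence traverses $t_{n+1}$, so no observed abstracted path is broken; when it splits a state it discards the incident transitions, but the ensuing \textsc{Construct}-saturation re-derives, now at finer granularity, exactly the transitions needed to carry every observed sequence, by the same satisfiability computation as above (here the partition invariant is used to know each $s_i$ still has a well-defined image). Taking the invariant at termination gives the ``$\Rightarrow$'' half of condition~(2); its length-zero instance, together with the \textsc{Init} definition of the initial symbolic state and the partition invariant (so that the unique initial concrete valuation maps to that state and to no other), gives condition~(1).

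For the ``$\Leftarrow$'' half of condition~(2) I would invoke \cref{thm:termination} and the fairness requirement that \textsc{RmPath} appears infinitely often: at termination \textsc{RmPath} is inapplicable, which by inspection of its premise means that every symbolic path $\pi_n$ with $\mathrm{concretize}(\pi_n) \in \history$ (read: with an observed concrete sequence whose state-wise $\alpha$-image is $\pi_n$) has the property that each of its one-step extensions $\pi_n \join t_{n+1} q_{n+1}$ in the \efsm again satisfies $\mathrm{concretize}(\pi_{n+1}) \in \history$. An induction on the length of a symbolic path of the final \efsm --- base case: the path $(q_0)$ is concretized by the initial concrete state; step case: an extension of a concretizable path is concretizable --- then shows that every symbolic path has an observed concrete witness with matching labels, which is condition~(2) from right to left. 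The qualifier ``ignoring loop transitions'' enters exactly here: the modified \textsc{RmPath} only examines paths in which each loop transition is used at most once, so the induction establishes condition~(2) for precisely that class of paths, and the theorem is to be read accordingly.

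The step I expect to be the main obstacle is the \textsc{SplitRemove}-then-\textsc{Construct} case of the over-approximation invariant: one must show that tearing down \emph{all} transitions incident to a split state and rebuilding them from scratch cannot lose any observed concrete path. This rests on the partition invariant (every concrete state retains a well-defined, merely finer, abstract image) together with the defining property of the inferred pre-/post-conditions (they hold on all observed invocations, keeping the relevant conjunctions satisfiable so that \textsc{Construct} re-adds every needed transition). A secondary point that needs to be pinned down is the precise reading of \texttt{concretize} --- as choosing, \emph{relative to the current abstraction}, some observed concrete sequence whose state-wise abstraction is the given symbolic path --- and the bookkeeping for the initial state, since $q_0$ may itself be split, in which case the \efsm's designated initial symbolic state must be understood as the part of the split that contains the all-zero valuation; once that reading is fixed, condition~(1) is immediate.
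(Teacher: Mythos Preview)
Your proposal is correct and follows the same high-level skeleton as the paper's own proof: condition~(1) from the \textsc{Init} rule's definition of $q_0$, the ``$\Rightarrow$'' half of condition~(2) from \textsc{Construct}, and the ``$\Leftarrow$'' half from \textsc{RmPath} being exhausted at termination. The paper is far terser on each point: for the forward direction it writes only ``According to rule \textsc{Construct}, (i) holds in the resulting automaton,'' and for the backward direction it argues by contradiction (a spurious path would make \textsc{RmPath} applicable, contradicting termination) rather than by your direct length induction. What you add that the paper leaves implicit is the explicit \emph{partition invariant} (so that $\alpha$ stays a well-defined total map across splits, and the initial-state bookkeeping survives a split of $q_0$) and the explicit \emph{over-approximation invariant} maintained through the \textsc{SplitRemove}--\textsc{Construct} cycle, grounded in the observation that the inferred pre-/post-conditions hold on \emph{every} observed invocation and hence keep the relevant conjunctions satisfiable. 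These are genuine gaps in the paper's argument that you have correctly identified and filled; the paper simply takes them for granted. Your direct induction for the ``$\Leftarrow$'' direction is also cleaner than the paper's contradiction, since it makes the dependence on \cref{thm:termination} and fairness condition~(3) transparent rather than implicit.
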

\begin{proof}
The resulting automaton is an \efsmtext.
To prove \cref{thm:soundness}, we need to prove that the automaton satisfies Condition (1) and
Condition (2) in \cref{def:minimal}.
	% \begin{itemize}[leftmargin=*, label={}]

Condition (1) holds because by default, all uninitialized state variables of smart contracts are
valued zero, and $s_0$ is the initial concrete state where all state variables are valued zero.
The automaton takes the zero-valued predicates of all state variables as the initial symbolic state
$q_0$.
In this setting, $\{s_0\} = \alpha^{-1}(q_0)$.
Thus, Condition (1) is satisfied.

Condition (2) can be verified by proving: (i) every observed concrete invocation sequence (i.e., a
concrete path in \history) is preserved in the resulting automaton;
(ii) every symbolic path of the resulting automaton has a corresponding concrete path in \history,
i.\,e., no symbolic path is spurious.
% The set of found predicate candidates $\varPhi$ satisfys (a) according to the rule
%\textsc{TransactionAllow}.
According to rule \textsc{Construct}, (i) holds in the resulting automaton.
%>>>>>>> update to approach on the rules
We prove (ii) by contradiction.
Suppose the resulting automaton contains a spurious path $\pi: q_0 t_1 t_2 \cdots q_n \in$ \efsm
where $\nexists \;\mathrm{concretize}(\pi) \in $ \history.
%There are two cases for the path: $\exists t_i\; val(t_i) \not \in \Delta$ and $\forall t_i \;
%val(t_i) \in \Delta$
The \textsc{RmPath} rule will be applied to remove $\pi$ either via state splitting or transition
removal at finite actions.
Thus, there is no such spurious path $\pi$, which is in contradiction with the assumption.
Therefore, (b) is proved.
Hence, Condition (2) is satisfied by the resulting automaton.

With Condition (1) and (2) satisfied, \cref{thm:soundness} is proved.
\end{proof}

\end{document}